\documentclass[lettersize,journal]{IEEEtran}
\usepackage{amsmath,amsfonts,amssymb}
\newtheorem{theorem}{Theorem}
\newtheorem{proof}{Proof}
\usepackage{graphicx}
\usepackage{subfigure}
\usepackage[colorlinks]{hyperref}
\usepackage{multirow,multicol}
\usepackage{makecell}

\begin{document}
\title{Fluid Antenna System-Assisted\\Self-Interference Cancellation for In-Band\\Full Duplex Communications}
\author{Hanjiang Hong,~\IEEEmembership{Member,~IEEE}, 
        Kai-Kit Wong,~\IEEEmembership{Fellow,~IEEE}, 
        Hao Xu,~\IEEEmembership{Senior Member,~IEEE},\\
        Yiyan Wu,~\IEEEmembership{Life Fellow,~IEEE},
        Sai Xu,~\IEEEmembership{Member,~IEEE},
        Chan-Byoung Chae, \emph{Fellow, IEEE},\\
        Baiyang Liu,~\IEEEmembership{Senior Member,~IEEE}, and 
        Kin-Fai Tong, \emph{Fellow, IEEE}
\vspace{-8mm}

\thanks{The work of K. K. Wong is supported by the Engineering and Physical Sciences Research Council (EPSRC) under Grant EP/W026813/1.}
\thanks{The work of H. Hong is supported by the Outstanding Doctoral Graduates Development Scholarship of Shanghai Jiao Tong University.}
\thanks{The work of C.-B. Chae was in part supported by the Institute for Information and Communication Technology Planning and Evaluation (IITP)/NRF grant funded by the Ministry of Science and ICT (MSIT), South Korea, under Grant RS-2024-00428780 and 2022R1A5A1027646.}
\thanks{The work of K. F. Tong and B. Liu was funded by the Hong Kong Metropolitan University, Staff Research Startup Fund: FRSF/2024/03.}

\thanks{H. Hong, K. K. Wong, and S. Xu are with the Department of Electronic and Electrical Engineering, University College London, London, United Kingdom. K. K. Wong is also with Yonsei Frontier Lab, Yonsei University, Seoul, South Korea (e-mail: $\rm \{hanjiang.hong, kai\text{-}kit.wong, sai.xu\}@ucl.ac.uk$).}
\thanks{H. Xu is with the National Mobile Communications Research Laboratory, Southeast University, Nanjing 210096, China (e-mail: $\rm hao.xu@seu.edu.cn$).}
\thanks{Y. Wu is affiliated with the Department of Electrical and Computer Engineering, Western University, London, ON N6A 3K7, Canada (e-mail: $\rm yiyan.wu@ieee.org$).}
\thanks{C.-B. Chae is with the School of Integrated Technology, Yonsei University, Seoul, 03722 South Korea (e-mail: $\rm cbchae@yonsei.ac.kr$).}
\thanks{B. Liu and K. F. Tong are with the School of Science and Technology, Hong Kong Metropolitan University, Hong Kong SAR, China (e-mail: $\rm \{byliu, ktong\}@hkmu.edu.hk$).}

\thanks{Corresponding author: Kai-Kit Wong.}
}
\maketitle
\begin{abstract}
In-band full-duplex (IBFD) systems are expected to double the spectral efficiency compared to half-duplex systems, provided that loopback self-interference (SI) can be effectively suppressed. The inherent interference mitigation capabilities of the emerging fluid antenna system (FAS) technology make it a promising candidate for addressing the SI challenge in IBFD systems. This paper thus proposes a FAS-assisted self-interference cancellation (SIC) framework, which leverages a receiver-side FAS to dynamically select an interference-free port. Analytical results include a lower bound and an approximation of the residual SI (RSI) power, both derived for rich-scattering channels by considering the joint spatial correlation amongst the FAS ports. Simulations of RSI power and forward link rates validate the analysis, showing that the SIC performance improves with the number of FAS ports. Additionally, simulations under practical conditions, such as finite-scattering environments and wideband integrated access and backhaul (IAB) channels, reveal that the proposed approach offers superior SIC capability and significant forward rate gains over conventional IBFD SIC schemes.

\end{abstract}

\begin{IEEEkeywords}
Fluid antenna system (FAS), in-band full duplex, self-interference cancellation, wireless backhaul.
\end{IEEEkeywords}

\section{Introduction}
\IEEEPARstart{T}{he advent of} six-generation (6G) wireless networks is presenting an unprecedented challenge---a mobile technology that needs to achieve a collection of ambitious performance indicators such as terabits per second peak rate, sub-millisecond latency, and ultra-dense connectivity, etc \cite{Ericsson6G,wang2023ontheroad,Tariq-2020}, and is anticipated to provide a range of integrated services \cite{xu2024intelligent}. This is difficult because spectrum governs how much data we can send reliably over a wireless channel but it is precious. To address this, we either find more bandwidth by moving up the frequency band or attempt to utilize the existing bandwidth more efficiently. A prominent technology of the latter is full-duplex communication that allows the transmitted and received data to share the same physical channel \cite{Sabharwal-2010}.

In-band full duplex (IBFD) communication has gained much attention in recent years, for its superior spectral efficiency by permitting simultaneous transmission and reception within the same frequency band \cite{sabharwal2014inband,Kim2015asurvey,Kim2024asurvey,mohammadi2023acomprehensive}. In contrast, traditional half-duplex systems require two separate channels to separate uplink and downlink communications. Although more standardization efforts are needed, the 3rd Generation Partnership Project (3GPP) and numerous industry endeavors continue to advance towards IBFD, achieving successes, such as integrated access and backhaul (IAB) from Release 16 to Release 18  \cite{suk2022full,chen20235Gadvanced} and subband full duplex in Release 19 \cite{Abdelghaffar2024subband}. 

Despite its theoretical appeal, self-interference (SI)---where the loopback signal (LBS) from a co-located transmitter overwhelms the receiver---continues to be a significant barrier to the realization of IBFD systems. Self-interference cancellation (SIC) techniques have been developed to tackle the SI problem \cite{mohammadi2023acomprehensive}. Various SIC techniques have been formulated and can be broadly divided into two main categories: passive suppression and active cancellation. Passive suppression approaches \cite{everett2014passive} alleviate SI in the propagation domain before it is processed by the receiver circuitry, while active cancellation techniques \cite{mohammadi2023acomprehensive} mitigate SI through the reconstruction and subtraction of the SI from the received signal. Active cancellation techniques can be divided into analog and digital SIC techniques based on the signal domain where the SI is subtracted.

Typical IBFD systems usually deploy both passive suppression and active cancellation techniques to achieve significant SI mitigation. Active cancellation techniques can handle high-power SI based on training sequence or adaptive interference cancellation techniques. Within this framework, LBS serves as a reference signal for SI. By utilizing channel state information (CSI) obtained at the receiver, filter weights can be derived and applied to the reference signal for the reconstruction of the SI-inverse signal. This resultant SI-inverse signal is subsequently combined with the received signal to form an SI null. In \cite{bha2013full}, a training-based SIC scheme was proposed, which uses the training phase to derive adaptive filter weights to reconstruct the SI-inverse signal. Later in \cite{mas2017channel}, a method was introduced for estimating the channel at baseband during the training phase, and then using the estimated CSI in RF SIC. Recently, \cite{le2022atwo} presented a two-stage analog filtering structure to cancel both the direct leakage SI and the residual self-interference (RSI). Also, the frequency-domain radio frequency (RF) SIC (FD-RF-SIC) in \cite{Hong2023frequency} optimized the filter weights based on discrete Fourier transform (DFT)-windowing, effectively cancelling the SI in frequency domain. Machine learning approaches for SIC have also recently been developed  \cite{muranov2021ondeep, dong2024augmentation, elsayed2025ahybrid}.

Going forward, techniques capable of enhancing the active SIC performance are always welcome. One such technology is the fluid antenna system (FAS) \cite{wong2021FAS,wong2020FAS,New2024aTutorial,Lu-2025} which can give IBFD the needed additional degree-of-freedom (DoF) for an effective active SIC. FAS is a groundbreaking technology that facilitates dynamic and real-time reconfiguration of antenna positions, structures, and configurations to empower the physical layer of wireless communications, encouraged by recent advances in reconfigurable antennas, e.g., \cite{Hoang-2021,Deng-2023}, with FAS prototypes reported in \cite{Shen-tap_submit2024,zhang2024pixel,Liu-2025arxiv}. 

The concept of FAS was first introduced in 2020 \cite{wong2020FAS}, and has since been widely studied in different channel models \cite{Khammassi2023,new2024fluid,new2023information}. Recent attempts have also focused on the channel estimation problem for FAS \cite{xu2023channel,new2025channel,zhang2025successive} and its integration into fifth-generation (5G) new radio (NR) systems \cite{hong2024coded,hong2025Downlink,hong2025fluid}. Furthermore, FAS exhibits substantial potential for multiuser communications by finding the port for interference null \cite{wong2022FAMA,Wong2024cuma,Xu2024revisiting,xu2023capacity}. Additionally, there is a new branch of research integrating the idea of FAS into reconfigurable intelligent surface (RIS) technologies \cite{ghadi2024on,salem2025first,xiao2025fluid,ghadi2025fires}.

The inherent flexibility in antenna positioning characteristic of FAS renders it particularly well-suited for SIC techniques. Hence, the synergy between FAS and IBFD makes sense and is important for combating SI by leveraging the additional DoF provided by FAS. In 2023, Skouroumounis and Krikidis applied FAS in a full duplex network \cite{sko2023full}. They analyzed the performance achieved by large-scale FAS-assisted full duplex networks and the effects of channel estimation on the overall network performance. However, their approach identified the port with the strongest forward signal (FWS) and treated SI as a normal interference, neglecting the use of CSI from LBS for effective SI mitigation. This oversight is likely to result in performance degradation within the IBFD system. 

Motivated by the above, this paper proposes a FAS-assisted SIC approach. This strategy incorporates a FAS component at the receiver, which effectively observes the fading envelopes across the available space. Consequently, the fluid antenna can be realigned to positions where the desired FWS is strong and LBS is in a deep fade. This tactical adjustment of FAS mitigates SI substantially by selecting the optimal port, hence minimizing the RSI power. Moreover, this approach functions solely on the principles of FAS and has the potential to be integrated with active SIC techniques, enhancing the utilization of the reference signal for SI-inverse signal reconstruction and cancellation. System validation will be elucidated through simulations, as detailed in Section \ref{subsec:combination}.

Our main contributions are summarized as follows:
\begin{itemize}
\item A FAS-assisted SIC framework is introduced, exploiting the spatial DoF of the fluid antenna at the receiver. The proposed approach utilizes the receiver's FAS to identify the port that experiences minimal SI. The framework is initially developed under the assumption of rich scattering channels, and subsequently extended to finite-scattering channels and wideband channels.
\item The performance of the FAS-assisted SIC approach is analyzed in rich-scattering channels. A closed-form lower bound on the average RSI power is first derived. This bound represents a special case in the absence of spatial correlation among the FAS ports. Secondly, we approximate the average RSI power as a $2M$-fold surface integral, taking into consideration $M$ dominant eigenvalues within the eigenvalue-based channel model in \cite{Khammassi2023}.
\item Simulation results validate the performance of both the lower bound and the proposed approximation. The findings demonstrate that the SIC capability enhances with the number of FAS ports, $N$, given the fixed normalized size of FAS, $W$, in an ideal scenario with perfect CSI. However, an increase in $N$ necessitates a larger number of pilot-training symbols, offsetting the capacity gain. Also, for any given $N$, the SIC capability initially improves before approaching the derived bound. This indicates that increasing the FAS size can significantly elevate the SIC ability, particularly when $W$ is small. 
\item Besides, the simulations are also carried out under more realistic finite-scattering channels \cite{buzzi2016on} and the wideband IAB channel \cite{suk2022full}. The results reveal that the SIC capability shown in rich scattering environments serves as the upper bound for that observed in finite-scattering and IAB channels. But the overall trends of the results remain consistent. Finally, we integrate the proposed FAS-assisted SIC with the FD-RF-SIC in \cite{Hong2023frequency}, and the results confirm the compatibility of our proposed approach.
\end{itemize}

The rest of this paper is organized as follows. Section~\ref{sec:SystemModel} presents the proposed FAS-assisted SIC framework. Section~\ref{sec:AnalyticalResults} analyzes the performance by deriving the lower bound and an approximation of the RSI power. Simulation results are shown in Section~\ref{sec:Sim}. Finally, Section~\ref{sec:conclusion} concludes this paper.

{\em Notations:} Scalars are represented by lowercase letters while vectors and matrices are denoted by lowercase and uppercase boldface letters, respectively. Also, transpose and hermitian operations are denoted by superscript $T$ and $\dag$, respectively. In addition, for a complex scalar $x$, $\lvert x \rvert$ and $x^\dag$ represent its modulus and conjugate, respectively.

\begin{figure}
\centering
\subfigure[]{\includegraphics[width = 0.8\linewidth]{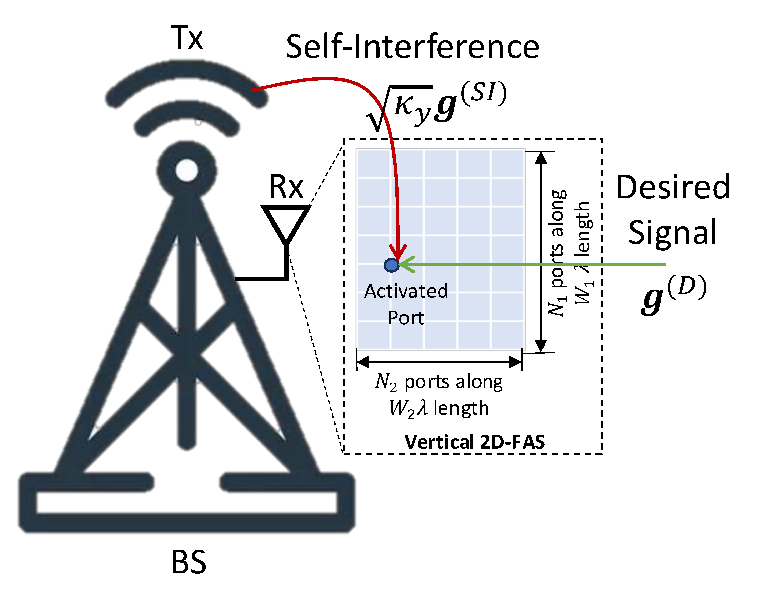}}
\subfigure[]{\includegraphics[width = 0.8\linewidth]{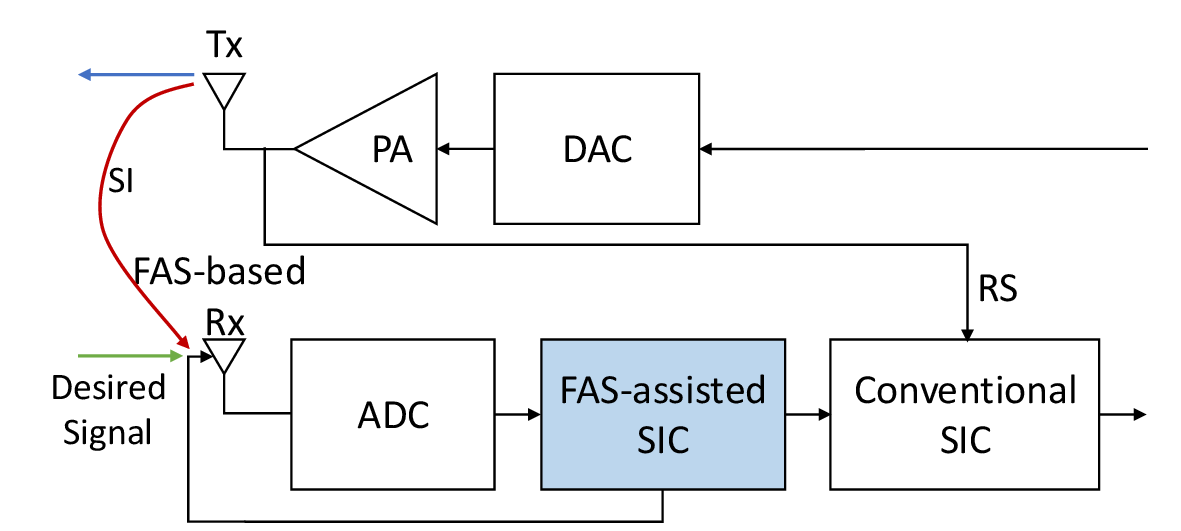}}
\caption{Illustration of an IBFD system with a FAS-assisted receiver.}\label{Fig:SysModel}
\end{figure}

\section{FAS-assisted SIC Framework}\label{sec:SystemModel}
As illustrated in Fig.~\ref{Fig:SysModel}, we consider a single-input single-output (SISO) IBFD communication system wherein the transmitter (Tx) is equipped with a fixed-position antenna (FPA) while the receiver (Rx) features a two-dimensional FAS (2D-FAS). The fluid antenna within the 2D-FAS configuration can be instantaneously switched to one of the $N = (N_1 \times N_2)$ ports, which are uniformly distributed across a 2D area of $W = W_1 \lambda \times W_2 \lambda$, where $\lambda$ denotes the carrier wavelength. In this IBFD system, the receiver simultaneously receives the desired FWS from the remote transmitter and the LBS from the co-located transmitter antenna. The fundamental objective of SIC is to subtract the loopback SI from the received signal, thereby enabling the following processing of the interference-free desired signal through demodulation and decoding. In this paper, we propose to utilize FAS for SIC, termed FAS-assisted SIC. The idea of FAS-assisted SIC is to utilize the FAS at the receiver to identify the port with the lowest RSI power, thus naturally mitigating the SI and enhancing the FWS reception. Notably, the proposed FAS-assisted SIC approach can function independently or in tandem with other conventional active SIC techniques. Here, we focus on the FAS-assisted SIC.

\subsection{System Model}
For notational simplicity, we define the mapping of the port as $(n_1, n_2) \to n: n = n_1\times N_2 +n_2$, where $n_1 \in \{0, \dots, N_1 -1\}$, $n_2 \in \{0, \dots, N_2 -1\}$, and $n \in \{0, \dots, N-1\}$. At the FAS-assisted receiver, the received signal at the $n$-th port with time-index omitted is given by
\begin{equation}\label{Eq:r_narr}
    y_n = g^{({\rm D})}_n s_{\rm D} +\sqrt{\kappa_y} g^{({\rm SI})}_n \Psi(s_{\rm SI})+\eta_{n},
\end{equation}
in which $s_{\rm D} \sim \mathcal{CN}(0,E_s)$ is the desired FWS at the receiver, $g^{({\rm D})}_n \sim \mathcal{CN}(0,\sigma_g^2)$ is the forward channel coefficient from the remote transmit antenna to the receiver, $s_{\rm SI} \sim \mathcal{CN}(0,E_s)$ denotes the loopback SI signal, $\Psi(\cdot)$ is the nonlinear distortion function associated with the SI, $g^{({\rm SI})}_n \sim \mathcal{CN}(0,\sigma_g^2)$ represents the loopback channel coefficient from the co-located transmitter to the receiver, $\kappa_Y$ is the average power ratio between $g^{({\rm SI})}_n$ and $g^{({\rm D})}_n$, and $\eta_{n}\sim \mathcal{CN}(0,\sigma_\eta^2)$ is the additive white Gaussian noise. Here, $E_s$ and $\sigma_\eta^2$ denote the power of signal and noise, respectively, and $\sigma_g^2$ is the variance of the channel.

It can be established from \eqref{Eq:r_narr} that an estimate of the FWS $s_D$ at the $n$-th port can be formulated as
\begin{align}
\hat{s}_{{\rm D},n} &= \frac{\left(g^{({\rm D})}_n \right)^\dag}{|g^{({\rm D})}_n|^2}y_n\notag\\ 
&= \! s_{\rm D} \!+\! \sqrt{\kappa_y} \frac{\! \left(g^{({\rm D})}_n \right)^\dag\!\!}{|g^{({\rm D})}_n|^2} g^{({\rm SI})}_n \Psi(s_{{\rm SI}})+\frac{\! \left(g^{({\rm D})}_n \right)^\dag\!\!}{|g^{({\rm D})}_n|^2} \eta_{n}, 
\end{align}
where the second component represents the RSI. The power of RSI at the $n$-th port, denoted as $P_n$, is expressed as
\begin{align}
P_n & \hspace{.5mm}= \kappa_y \frac{|g_n^{({\rm SI})}|^2}{|g_n^{({\rm D})}|^2} \mathbb{E}\left\{\left|\Psi(s_{{\rm SI}})\right|^2\right\}\notag\\
& \overset{(a)}{=} \kappa_y E_{\rm SI} \frac{|g_n^{({\rm SI})}|^2}{|g_n^{({\rm D})}|^2},\label{Eq:PRSI_n}
\end{align}
where $(a)$ is obtained from the assumption of constant power of nonlinear SI, i.e., $ \mathbb{E}\left\{\left|\Psi(s_{{\rm SI}})\right|^2\right\} = E_{\rm SI}$. Thus, the signal-to-interference-plus-noise ratio (SINR) observed at the $n$-th port for the desired FWS, denoted as $\gamma^{({\rm D})}_n$, is written as
\begin{align}
\gamma^{({\rm D})}_n &\hspace{.5mm} = \frac{E_s |g_n^{({\rm D})}|^2}{\kappa_y E_{\rm SI} {|g_n^{({\rm SI})}|^2} + {\sigma_\eta^2}}\notag\\ 
& \overset{(a)}{\approx} \frac{E_s |g_n^{({\rm D})}|^2}{\kappa_y E_{\rm SI} {|g_n^{({\rm SI})}|^2}} = \frac{E_s}{P_{n}},\label{Eq:sir_n}
\end{align}
where $(a)$ assumes that the noise power is considerably less than the interference power, thereby allowing the SINR to be approximated by the signal-to-interference ratio (SIR).

\subsection{Channel Model}
In the narrowband channel model, the channel coefficients $\boldsymbol{g}^{({\rm Z})} = [g^{({\rm Z})}_0,\dots,g^{({\rm Z})}_{N-1}]^T,~{\rm Z}\in \{{\rm D},{\rm SI}\}$ exhibit fluctuations but should remain constant during the transmission block period. Let $\sigma_g^2 \boldsymbol{\Sigma}$ be the covariance matrix of channel $\boldsymbol{g}^{({\rm Z})}$, i.e., $\mathbb{E}\left[ \boldsymbol{g}^{({\rm Z})} (\boldsymbol{g}^{({\rm Z})})^ \dag \right] = \sigma_g^2 \boldsymbol{\Sigma}$.  Following the framework outlined in \cite{Khammassi2023}, the spatial correlation is characterized according to Jake's model when the channel undergoes rich scattering. As such, the $(n,m)$-th entry of $\boldsymbol{\Sigma}$ is given by
\begin{equation}\label{Eq:corr}
\left[\boldsymbol{\Sigma}\right]_{n,m} \!\!=\! J_0 \! \left(\! 2\pi \sqrt{\! \left(\frac{n_1 \!-\! m_1}{N_1 \!-\! 1} W_1 \!\right)^2 \!\!+\! \left(\frac{n_2 \!-\! m_2}{N_2 - 1} W_2 \! \right)^2}\right),
\end{equation}
where $(m_1, m_2) \to m$, and $J_0(\cdot)$ is the zero-order Bessel function of the first order. Typically, by utilizing the eigenvalue-based model from \cite{Khammassi2023}, $g^{({\rm Z})}_n$ can be expressed as
\begin{equation}\label{Eq:chan_narr}
    g^{({\rm Z})}_n = \sigma_g \sum_{m=0}^{N-1} \sqrt{\lambda_m}\mu_{n,m}a^{({\rm Z})}_m,
\end{equation}
or in a vector form as 
\begin{equation}\label{Eq:chan_narr_vec}
    \boldsymbol{g}^{({\rm Z})} = \sigma_g \boldsymbol{U} \boldsymbol{\Lambda}^{\frac{1}{2}} \boldsymbol{a}^{({\rm Z})},
\end{equation}
in which $\boldsymbol{a}^{({\rm Z})} = [a^{({\rm Z})}_0,\dots,a^{({\rm Z})}_{N-1}]^T$, with $a^{({\rm Z})}_n \sim \mathcal{CN} (0,1)$. The matrices $\boldsymbol{\Lambda}$ and $\boldsymbol{U}$ stem from the singular value decomposition (SVD) of $\boldsymbol{\Sigma}$, given by $\boldsymbol{\Sigma} = \boldsymbol{U}\boldsymbol{\Lambda}\boldsymbol{U}^\dag$, where we have $\boldsymbol{\Lambda} = {\rm diag} ( \lambda_0, \dots, \lambda_{N-1})$ and $[\boldsymbol{U}]_{n,m} = \mu_{n,m}$. 

Under realistic IBFD settings, there is a line-of-sight (LOS) between the co-located Tx and Rx at the base station (BS). Thus, we also consider the finite-scattering channel model \cite{buzzi2016on} for the loopback channel $\boldsymbol{g}^{({\rm SI})}$. This model incorporates a LOS component with Rice factor $K$ and $N_p$ scattered components. In this case, $g_n^{({\rm SI})}$ can be written as
\begin{multline}\label{Eq:FSChann}
g_n^{({\rm SI})}=\sqrt{\!\!\frac{K \sigma_g^2}{K\!+1\!}}e^{j\alpha^{({\rm SI})}}e^{-j 2\pi \left[\frac{n_1\!W_1}{N_1\!-\!1}\sin \theta_0 \cos \phi_0 + \frac{n_2\!W_2}{N_2\!-\!1}\cos \theta_0\right]}\\
+ \sum_{l=1}^{N_p} a_l^{({\rm SI})}e^{-j2\pi \left[\frac{n_1W_1}{N_1-1} \sin \theta_l \cos \phi_l+ \frac{n_2W_2}{N_2-1}\cos\theta_l\right]},
\end{multline}
where $\theta_l$ and $\phi_l$ are the elevation and azimuth angles of arrival (AOA), respectively, for $l = 0,\dots,N_p$. Here, $\alpha^{({\rm SI})}$ denotes the random phase of the LOS component, and $a_l^{({\rm SI})}$ represents the random complex coefficient of the $l$-th path. Additionally, the complex gain satisfies $\mathbb{E}\{\sum_l |a_l^{({\rm SI})}|^2\} = \sigma_g^2/(K+1)$.

\subsection{FAS-assisted SIC}
Under the assumption of perfect knowledge of $\boldsymbol{g}^{({\rm D})}$ and $\boldsymbol{g}^{({\rm SI})}$, FAS at the receiver side selects the port exhibiting the minimum of RSI power $\{P_{n}\}$ for FAS-assisted SIC. This port selection can be expressed mathematically as
\begin{equation}\label{Eq:PortSelection}
    n^* = \mathop{\arg\min}_n~ \frac{|g_n^{({\rm SI})}|^2}{|g_n^{({\rm D})}|^2}. 
\end{equation}
As such, we are interested in the random variable defined as
\begin{equation}\label{Eq:R_min}
    R = \min\left\{\frac{|g_0^{({\rm SI})}|^2}{|g_0^{({\rm D})}|^2},\cdots,\frac{|g_{N-1}^{({\rm SI})}|^2}{|g_{N-1}^{({\rm D})}|^2}\right\}.
\end{equation}
The RSI power of this FAS-assisted SIC approach is given by 
\begin{equation}\label{Eq:PRSI_Narr}
    P = \kappa_y E_{{\rm SI}} R.
\end{equation}
Subsequently, the capacity of the desired FWS can be derived from \eqref{Eq:sir_n} and \eqref{Eq:PRSI_Narr} as 
\begin{equation} \label{Eq:achievableRate_Narr}
    C = \log_2(1+\gamma^{({\rm D})})  = \log_2 \left(1+ \frac{E_s}{\kappa_y E_{{\rm SI}} R}\right).
\end{equation}

In addition to the capacity in \eqref{Eq:achievableRate_Narr}, in this paper, we also seek to analyze another key indicator, the average RSI power. This metric is defined as 
\begin{equation}\label{Eq:aveP_def}
    \overline{P}\triangleq \mathbb{E} \left\{P\right\} = \kappa_y E_{{\rm SI}}  \mathbb{E}\{R\}.
\end{equation}
By comparing the average RSI power before and after SIC, one can assess the effectiveness of the overall implementation.

\subsection{Extension to Wideband Channels}
For wideband channels, signals are decomposed into multiple components across $F$ subcarriers. The incorporation of cyclic prefix (CP) effectively mitigates inter-symbol interference (ISI) and facilitates simplified frequency-domain processing. With time-index omitted, the wideband FAS channel at the $n$-th port can be modelled in the frequency domain as
\begin{multline}\label{Eq:r_wide}
\left[\!\!\! \begin{array}{c} y_n[0] \\ y_n[1] \\ \vdots \\ y_n[F\!\!-\!\!1] \end{array} \!\!\!\right]  = \boldsymbol{G}_n^{({\rm D})} \times \! \left[\!\!\!\begin{array}{c} s_{\rm D}[0] \\ s_{\rm D}[1] \\ \vdots \\ s_{\rm D}[F\!\!-\!\!1] \end{array}\!\!\!\right]\\
+ \boldsymbol{G}_n^{({\rm SI})}\times \! \left[\!\!\!\begin{array}{c} s_{\rm SI}[0] \\ s_{\rm SI}[1] \\ \vdots \\ s_{\rm SI}[F\!\!-\!\!1] \end{array}\!\!\!\right] + \left[\!\!\! \begin{array}{c} \eta_n[0] \\ \eta_n[1] \\ \vdots \\ \eta_n[F-1] \end{array}\!\!\!\right],
\end{multline}
where the channel matrices $\boldsymbol{G}_n^{({\rm Z})} = {\rm diag} (g_n^{({\rm Z})}[0],\dots,g_n^{({\rm Z})}[F-1])$, for ${\rm Z} \in \{{\rm D},{\rm SI}\}$. The channel is subject to frequency-selective fading when the coefficients $\{g_n^{({\rm Z})}[f]\}$ vary over $f$. Considering the FAS characteristics, the channel coefficients $\{g_n^{({\rm Z})}[f]\}_{\forall n}$ are correlated among the FAS ports, and the covariance matrix is expressed as $\mathbb{E}\{\boldsymbol{g}^{({\rm Z})}[f](\boldsymbol{g}^{({\rm Z})}[f])^\dag\} = \sigma_g^2 \boldsymbol{\Sigma}$. Here, $\boldsymbol{g}^{({\rm Z})}[f] = [g_0^{({\rm Z})}[f],\dots,g_{N-1}^{({\rm Z})}[f]]^T$, and the elements within $\boldsymbol{\Sigma}$ are specified in \eqref{Eq:corr}. Notably, in the scenario where $\boldsymbol{g}^{({\rm Z})} [f] = \boldsymbol{g}^{({\rm Z})}, \forall f \in [0,F)$, indicating that the channel is subject to flat fading, the wideband channel model in \eqref{Eq:r_wide} simplifies to the channel model presented in \eqref{Eq:r_narr}.

The desired FWS of the $f$-th subcarrier at the $n$-th port can be estimated by $w_n[f] = (g_n^{({\rm D})}[f])^\dag/|g_n^{({\rm D})}[f]|^2$ as
\begin{align}
\hat{s}_{{\rm D},n}[f] &= w_n[f]y_n[f]\notag\\
&= s_{\rm D}[f]\!+\!\!\sqrt{\kappa_y}w_n[f]g_n^{({\rm SI})}\![f]\Psi(s_{\rm SI}[f]) \!\!+\! w_n[f]\eta_n[f].
\end{align}
The power of RSI at the $n$-th port in the wideband channel, denoted as $P_{n}^{{\rm W}}$, can be calculated as
\begin{align}
P_{n}^{{\rm W}} &\hspace{.5mm} = \mathbb{E}_f\left\{\left|\sqrt{\kappa_y}w_n[f]g_n^{({\rm SI})}\![f]\Psi(s_{\rm SI}[f])\right|^2\right\}\notag\\ 
& \hspace{.5mm}= \kappa_y \mathbb{E}_f\left\{\left|w_n[f]g_n^{({\rm SI})}[f]\right|^2\right\} \mathbb{E}_f\left\{\left|\Psi(s_{\rm SI}[f])\right|^2\right\}\notag\\
& \overset{(a)}{=}\frac{\kappa_y E_{\rm SI}}{F}\sum_{f=0}^{F-1}\frac{|g_n^{({\rm SI})}[f]|^2}{|g_n^{({\rm D})}[f]|^2},\label{Eq:PRSI_W}
\end{align}
where $(a)$ is obtained from the assumption of constant nonlinear SI power, i.e., $\mathbb{E}\left\{\left|\Psi(s_{{\rm SI}}[f])\right|^2\right\} = E_{{\rm SI}}$. The implementation of FAS-assisted SIC within wideband channels identifies the port $n^*_{\rm W}$ that minimizes $\{P^{{\rm W}}_{n}\}$, given by
\begin{equation} \label{Eq:PortSelection_wide}
n^*_{\rm W} = \mathop{\arg\min}_n~ \frac{1}{F} \sum_{f=0}^{F-1}\frac{|g_n^{({\rm SI})}[f]|^2}{|g_n^{({\rm D})}[f]|^2}. 
\end{equation}
Thus, the RSI power of the wideband system is derived as
\begin{equation}\label{eq:aveP_wide}
    P_{\rm W} = P^{\rm W}_{n^*_{\rm W}} = \frac{\kappa_y E_{\rm SI}}{F} \sum_{f=0}^{F-1} \frac{|g_{n^*_{\rm W}}^{\rm SI}[f]|^2}{|g_{n^*_{\rm W}}^{\rm D}[f]|^2},
\end{equation}
and the transmission rate of the forward signal is estimated from the capacity as 
\begin{align}
\mathcal{R} & = {\rm BW} \times \mathbb{E}_f \left\{ \log_2 \left(1+ \gamma^{({\rm D})}[f]\right)\right\}\notag\\ 
&= \frac{\rm BW}{F} \times \sum_{f=0}^{F-1} \log_2 \left(1+ \frac{E_s |g_{n^*_{\rm W}}^{\rm D}[f]|^2}{\kappa_y E_{\rm SI}|g_{n^*_{\rm W}}^{\rm SI}[f]|^2}\right),\label{eq:rate_wide}
\end{align}
where ${\rm BW}$ is the system bandwidth.

\section{Performance Analysis}\label{sec:AnalyticalResults}
In this section, we present our analysis of the SIC performance in rich scattering narrowband channels, i.e., the channel model in \eqref{Eq:chan_narr}. We first provide a lower bound on the RSI power of the FAS-assisted SIC in the IBFD system. Subsequently, we employ the first-stage approximation delineated in \cite{Khammassi2023, Xu2024revisiting} to approximate the channel model in  \eqref{Eq:chan_narr}. This allows for a detailed analysis of the RSI power associated with FAS-assisted SIC utilizing the aforementioned approximation.

\subsection{Lower Bound on the Average RSI Power}
Letting $\boldsymbol{y} = [y_0,\dots,y_{N-1}]^T$ and $\boldsymbol{\eta} = [\eta_0,\dots,\eta_{N-1}]^T$, the received signals from all the FAS ports can be collected and written in a vector form as 
\begin{equation}\label{Eq:y_vector}
    \boldsymbol{y} = \boldsymbol{g}^{({\rm D})}s_{\rm D} +\sqrt{\kappa_y}\boldsymbol{g}^{({\rm SI})} \Psi(s_{\rm SI}) +\boldsymbol{\eta}.
\end{equation}
Considering the channel model in \eqref{Eq:chan_narr_vec} and applying the unitary matrix $\boldsymbol{U}$ to $\boldsymbol{y}$, we have
\begin{equation}
    \boldsymbol{y}' = \boldsymbol{U}^\dag \boldsymbol{y} = \boldsymbol{h}^{({\rm D})}s_{\rm D} +\sqrt{\kappa_y}\boldsymbol{h}^{({\rm SI})} \Psi(s_{\rm SI}) +\boldsymbol{\eta}',
\end{equation}
where
\begin{equation}\label{Eq:hh}
    \boldsymbol{h}^{({\rm Z})} = \boldsymbol{U}^\dag \boldsymbol{g}^{({\rm Z})} = \sigma_g \boldsymbol{\Lambda}^{\frac{1}{2}}\boldsymbol{a}^{({\rm Z})}, ~ {\rm Z} \in \{{\rm D},{\rm SI}\},
\end{equation}
and $\boldsymbol{\eta}' = \boldsymbol{U}^\dag \boldsymbol{\eta} \sim \mathcal{CN} (\boldsymbol{0}, \sigma_\eta^2 \boldsymbol{I})$. Following this de-correlated transformation, $\boldsymbol{h}^{({\rm Z})} \sim \mathcal{CN} (\boldsymbol{0},\sigma_g^2 \boldsymbol{\Lambda})$, and the entries in $\boldsymbol{h}^{({\rm Z})}$ are independent of each other. Now, denote 
\begin{equation}\label{Eq:RR_min}
    R' = \min\left\{\frac{|h_0^{({\rm SI})}|^2}{|h_0^{({\rm D})}|^2},\cdots,\frac{|h_{N-1}^{({\rm SI})}|^2}{|h_{N-1}^{({\rm D})}|^2}\right\},
\end{equation}
and denote the average RSI power of the de-correlated channel as $\overline{P}_{lb} = \kappa_y E_{\rm SI} \mathbb{E} \{R'\}$. In the following theorem, we analyze the cumulative distribution function (CDF) of the variable $R'$ and show that $\overline{P}_{lb}$ is the lower bound to the RSI power of the FAS-assisted SIC in the IBFD system. 

\begin{theorem}\label{th:cdf_R}
The CDF of the random variable $R$ in \eqref{Eq:R_min} is upper bounded by CDF of $R'$, given as
\begin{equation}\label{Eq:ubCDFRR}
F_{R'}(r) = 1-\left(\frac{1}{r+1}\right)^N,
\end{equation}
and the average RSI power $\overline{P}$ in \eqref{Eq:aveP_def} is lower bounded by
\begin{equation}\label{Eq:lbaveP}
\overline{P}_{lb} = \frac{\kappa_y E_{\rm SI}}{N-1}.
\end{equation}
\end{theorem}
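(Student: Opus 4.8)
The plan is to first pin down the exact law of $R'$ in the de-correlated domain, and then to establish the stochastic ordering $R \succeq_{\rm st} R'$ that promotes $\overline{P}_{lb}$ from a heuristic ``best case'' to a genuine lower bound. For the first part, recall from \eqref{Eq:hh} that $h_n^{({\rm Z})} \sim \mathcal{CN}(0,\sigma_g^2\lambda_n)$ with the branches mutually independent. Hence $|h_n^{({\rm SI})}|^2$ and $|h_n^{({\rm D})}|^2$ are independent exponential variables sharing the \emph{same} mean $\sigma_g^2\lambda_n$, so the common factor cancels in the ratio. A one-line integration of $P(|h_n^{({\rm SI})}|^2\le z|h_n^{({\rm D})}|^2)$ against the exponential density then yields the per-branch survival function $1/(1+z)$, independent of $n$. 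This cancellation of $\lambda_n$ is the crucial structural point: every de-correlated branch contributes an identically distributed ratio regardless of its eigenvalue, so the $N$ ratios in \eqref{Eq:RR_min} are i.i.d.

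Multiplying the $N$ independent survival functions gives $P(R'>r)=(1/(1+r))^N$, i.e.\ \eqref{Eq:ubCDFRR}, and integrating the tail yields $\mathbb{E}\{R'\}=\int_0^\infty (1/(1+r))^N\,dr = 1/(N-1)$ for $N\ge 2$; scaling by $\kappa_y E_{\rm SI}$ produces \eqref{Eq:lbaveP}. (The divergence of this integral at $N=1$ correctly reflects that a single ratio of exponentials is heavy-tailed and has no mean, matching the blow-up of $1/(N-1)$.)

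The substantive part is the ordering $F_R(r)\le F_{R'}(r)$. The first observation is that, because $[\boldsymbol{\Sigma}]_{n,n}=J_0(0)=1$ in \eqref{Eq:corr}, each correlated branch satisfies $g_n^{({\rm Z})}\sim\mathcal{CN}(0,\sigma_g^2)$, so every ratio $Z_n=|g_n^{({\rm SI})}|^2/|g_n^{({\rm D})}|^2$ in \eqref{Eq:R_min} has exactly the same marginal law $z/(1+z)$ as in the de-correlated case; the two models differ \emph{only} in the cross-port dependence. Consequently $F_R(r)\le F_{R'}(r)$ is equivalent to the upper-orthant inequality
\[
P\Big(\textstyle\bigcap_{n}\{Z_n>r\}\Big)\ \ge\ \prod_{n}P(Z_n>r),
\]
whose right-hand side is precisely $P(R'>r)$. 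I would prove this by establishing positive association of the two power vectors $(|g_0^{({\rm SI})}|^2,\dots,|g_{N-1}^{({\rm SI})}|^2)$ and $(|g_0^{({\rm D})}|^2,\dots,|g_{N-1}^{({\rm D})}|^2)$. The enabling identity is Wick's/Isserlis' rule, which gives $\mathrm{Cov}(|g_i^{({\rm Z})}|^2,|g_j^{({\rm Z})}|^2)=\sigma_g^4\,|[\boldsymbol{\Sigma}]_{i,j}|^2\ge 0$, a power correlation that is nonnegative \emph{regardless of the sign} of the entries of $\boldsymbol{\Sigma}$. Since $\{Z_n>r\}=\{|g_n^{({\rm SI})}|^2+r(-|g_n^{({\rm D})}|^2)>0\}$ is coordinatewise nondecreasing in the vector $(|g^{({\rm SI})}|^2,\,-|g^{({\rm D})}|^2)$ — which is associated because association survives simultaneous negation and the union of the two independent ${\rm SI}$ and ${\rm D}$ families — the expectation of the product of these nondecreasing indicators is at least the product of the expectations, giving exactly the displayed inequality. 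Integrating $1-F_R(r)\ge 1-F_{R'}(r)$ over $r$ then yields $\mathbb{E}\{R\}\ge\mathbb{E}\{R'\}$ and hence $\overline{P}\ge\overline{P}_{lb}$ for the quantity in \eqref{Eq:aveP_def}.

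The main obstacle is this last association step. Classical Gaussian association (Pitt's theorem) requires all pairwise correlations to be nonnegative, whereas Jake's kernel $J_0(\cdot)$ in \eqref{Eq:corr} is sign-indefinite, so association cannot be claimed at the level of the Gaussian channel coefficients themselves; moreover $|g_n^{({\rm Z})}|^2$ is a non-monotone function of the underlying independent modes $a_m^{({\rm Z})}$, so it does not inherit association from them automatically. The argument must therefore be run at the level of the squared magnitudes, where the relevant power correlation $|[\boldsymbol{\Sigma}]_{i,j}|^2$ is automatically nonnegative, and the delicate point is rigorously upgrading this pairwise nonnegativity to full association (equivalently, to the orthant inequality above) for an arbitrary sign-indefinite $\boldsymbol{\Sigma}$ — the natural tools being a Gaussian-correlation-inequality-type argument or a direct association proof for squared complex Gaussian vectors.
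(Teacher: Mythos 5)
Your computation of the law of $R'$ is exactly the paper's Appendix~\ref{app:ProofTheo1}: de-correlation via $\boldsymbol{U}$, cancellation of the eigenvalue $\lambda_n$ in each branch ratio, i.i.d.\ ${\rm Exponential}(1)$ numerators and denominators giving the per-branch survival $1/(1+r)$, the product form \eqref{Eq:ubCDFRR}, and the tail integral $\mathbb{E}\{R'\}=1/(N-1)$ (your $N\ge 2$ caveat is correct and left implicit in the paper). That half is complete and identical in approach.

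The gap is the one you yourself flag, and it is genuine, not a formality. Isserlis' rule only gives you \emph{pairwise} nonnegative covariance $\sigma_g^4\,\lvert[\boldsymbol{\Sigma}]_{i,j}\rvert^2$ of the squared magnitudes, and pairwise nonnegative correlation does not imply association: association is a strictly stronger multivariate property, and for $N\ge 3$ there are standard examples of pairwise positively correlated vectors that violate orthant inequalities. None of the available sufficient conditions rescues the step for the kernel \eqref{Eq:corr}: Pitt's theorem needs an elementwise nonnegative Gaussian covariance, which $J_0$ violates; association (equivalently, in this setting, infinite divisibility) of the squared-magnitude vector of a complex Gaussian --- a Krishnamoorthy--Parthasarathy multivariate gamma --- holds only under Griffiths/Bapat-type sign conditions on $\boldsymbol{\Sigma}$, not for arbitrary sign-indefinite $J_0$-generated covariances; and the inequalities that do hold for arbitrary $\boldsymbol{\Sigma}$ (\v{S}id\'{a}k, or Royen's Gaussian correlation inequality applied after conditioning on the forward channel, since polydiscs are symmetric and convex) are \emph{lower}-orthant statements, which for $N\ge 3$ do not imply the \emph{upper}-orthant inequality $P(\bigcap_n\{Z_n>r\})\ge\prod_n P(Z_n>r)$ that you need. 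One can sharpen your reduction: conditioning on $\boldsymbol{g}^{({\rm D})}$ and using $\mathbb{E}\{\prod_n e^{-r\lvert g_n^{({\rm D})}\rvert^2/\sigma_g^2}\}=\det(\boldsymbol{I}+r\boldsymbol{\Sigma})^{-1}\ge (1+r)^{-N}$ (Hadamard's inequality) disposes of the forward-channel side exactly, so everything collapses to the single unproven claim that $P(\bigcap_n\{\lvert g_n\rvert^2>c_n\})\ge\prod_n e^{-c_n}$ for a complex Gaussian vector with unit-diagonal, otherwise arbitrary $\boldsymbol{\Sigma}$ --- but that claim is precisely the hard part and cannot be closed by citing standard results.

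In fairness to you, the paper does not close this step either: its Appendix~\ref{app:ProofTheo1} simply observes that the uncorrelated channel ``can be regarded as a special case'' of \eqref{Eq:y_vector} attained when the ports are sufficiently distanced, and from that physical intuition \emph{asserts} that $F_{R'}$ upper-bounds $F_R$, hence \eqref{Eq:lbaveP}. So measured against the paper, you have reproduced everything it actually proves and correctly identified --- and honestly declared --- the stochastic-dominance step it hand-waves; measured as a self-contained proof of Theorem~\ref{th:cdf_R}, your proposal is incomplete at exactly that step.
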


\begin{proof}
See Appendix \ref{app:ProofTheo1}.
\end{proof}

The random variable $R'$ can be viewed as a special case of $R$ that lacks spatial correlation amongst the FAS ports. In this idealized context, an increase in the number of ports, $N = N_1\times N_2$, results in a reduction of the RSI power, as illustrated in \eqref{Eq:lbaveP}, thereby enhancing SIC. However, in the FAS with $N$ ports uniformly distributed within a limited 2D plane of $W = W_1\lambda \times W_2\lambda$, the FAS ports are usually correlated. Increasing the number of ports decreases the RSI power at the beginning, but then saturates within a fixed $W^* = W^*_1\lambda \times W^*_2\lambda$ size. Accordingly, with a fixed $N$, the average RSI power $\overline{P}$ will approach to the lower bound $\overline{P}_{lb}$ if $W$ increases.

\subsection{Approximated Results}
Using the channel in \eqref{Eq:chan_narr} for analysis results in complicated expressions involving $N$ nested integrals. Thus, the development of a channel model that can effectively approximate the strong correlation inherent in FAS is of great importance. Here, we employ the first-stage approximated channel model as in \cite{Khammassi2023, Xu2024revisiting}, wherein only $M \ll N$ terms in \eqref{Eq:chan_narr} with the largest eigenvalues are considered. The approximated channel, denoted as $\hat{g}_n^{({\rm Z})}$, for ${\rm Z} \in \{{\rm D},{\rm SI}\}$, is defined as
\begin{equation}\label{Eq:g_fsapprox}
\hat{g}_n^{({\rm Z})} = \sigma_g \!\! \sum_{m = 0}^{M-1} \!\! \sqrt{\lambda_m}\mu_{n,m} a_m^{({\rm Z})} \!+\! \sigma_g \sqrt{1\!-\!\sum_{m = 0}^{M-1} \lambda_m \mu^2_{n,m}} b_n^{({\rm Z})},
\end{equation}
where the additional variable $b_n^{({\rm Z})} \sim \mathcal{CN} (0,1)$ is introduced to ensure a constant normalized variance of the channel $\hat{g}^{({\rm Z})}_n$.

Accordingly, the average RSI power in \eqref{Eq:aveP_def} can be approximated as 
\begin{equation}\label{Eq:aveP_fsapprox}
    \overline{P} \approx \kappa_y E_{\rm SI} \mathbb{E}\{\hat{R}\},
\end{equation}
where the random variable $\hat{R}$ is defined as
\begin{equation}
    \hat{R} = \min\left\{\frac{|\hat{g}_0^{({\rm SI})}|^2}{|\hat{g}_0^{({\rm D})}|^2},\cdots,\frac{|\hat{g}_{N-1}^{({\rm SI})}|^2}{|\hat{g}_{N-1}^{({\rm D})}|^2}\right\}.
\end{equation}
In the next theorem, we present the CDF and the expectation of $\hat{R}$ in order to approximate the average RSI power.

\begin{theorem}\label{th:approx}
With the approximation $\hat{g}_n^{({\rm Z})}$ in \eqref{Eq:g_fsapprox}, the CDF and the expectation of $\hat{R}$ are given by \eqref{Eq:CDF_fsapprox} and \eqref{Eq:Exp_fsapprox} {at the top of the next page}, respectively, where $Q_1(\cdot,\cdot)$ denotes the Marcum Q-function of order 1, $I_0(\cdot)$ is the modified Bessel function of the first kind, and $\alpha_n^{({\rm Z})}$ and $\beta_n^{({\rm Z})}$ are given by
\begin{equation}\label{Eq:alphabeta}
    \left\{
        \begin{array}{l}
            \alpha_n^{({\rm Z})} = \sigma_g^2 \left|\sum\limits_{m=0}^{M-1} \sqrt{\lambda_m}\mu_{n,m} a_m^{({\rm Z})}\right|^2, \\
            \beta_n^{({\rm Z})} = \frac{\sigma_g^2}{2} \left(1-\sum\limits_{m=0}^{M-1} \lambda_m \mu_{n,m}^2\right).
        \end{array}
    \right.
\end{equation}
\end{theorem}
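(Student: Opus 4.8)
The plan is to exploit the additive structure of the approximated channel in \eqref{Eq:g_fsapprox}: the term $\sigma_g \sum_{m=0}^{M-1}\sqrt{\lambda_m}\mu_{n,m}a_m^{({\rm Z})}$ is shared across all ports through the common coefficients $\boldsymbol{a}^{({\rm Z})}$, whereas the residual term $\sigma_g\sqrt{1-\sum_{m}\lambda_m\mu_{n,m}^2}\,b_n^{({\rm Z})}$ is independent across ports. First I would condition on $\boldsymbol{a}^{({\rm SI})}$ and $\boldsymbol{a}^{({\rm D})}$, i.e.\ on the $2M$ complex Gaussian coefficients. Conditioned on these, each $\hat{g}_n^{({\rm Z})}$ is complex Gaussian whose mean has squared modulus $\alpha_n^{({\rm Z})}$ and whose real and imaginary parts each have variance $\beta_n^{({\rm Z})}$, exactly as defined in \eqref{Eq:alphabeta}. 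The decisive point is that, after conditioning, the only remaining randomness lies in the mutually independent $\{b_n^{({\rm Z})}\}$, so the $N$ per-port ratios $|\hat{g}_n^{({\rm SI})}|^2/|\hat{g}_n^{({\rm D})}|^2$ become conditionally independent across $n$, and within each port the numerator and denominator are independent as well.

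Next I would characterise a single port. Conditioned on $\boldsymbol{a}^{({\rm Z})}$, the quantity $|\hat{g}_n^{({\rm Z})}|^2$ is a scaled noncentral chi-squared variable with two degrees of freedom, noncentrality $\alpha_n^{({\rm Z})}/\beta_n^{({\rm Z})}$ and scale $\beta_n^{({\rm Z})}$, so its complementary CDF is $Q_1\!\big(\sqrt{\alpha_n^{({\rm Z})}/\beta_n^{({\rm Z})}},\,\sqrt{y/\beta_n^{({\rm Z})}}\big)$ and its density carries the $I_0(\cdot)$ factor. Since numerator and denominator are conditionally independent, I would obtain the conditional survival function of the per-port ratio, $\Pr\{|\hat{g}_n^{({\rm SI})}|^2 > r\,|\hat{g}_n^{({\rm D})}|^2\}$, by integrating the numerator's complementary CDF against the denominator's density over the common value $|\hat{g}_n^{({\rm D})}|^2=t$. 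This yields the per-port building block in which both $Q_1$ and $I_0$ appear.

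Then I would assemble the minimum and de-condition. By the conditional independence established above, the conditional CDF of $\hat{R}$ equals $1-\prod_{n=0}^{N-1}\Pr\{|\hat{g}_n^{({\rm SI})}|^2 > r\,|\hat{g}_n^{({\rm D})}|^2 \mid \boldsymbol{a}^{({\rm SI})},\boldsymbol{a}^{({\rm D})}\}$. Averaging over the standard complex-Gaussian laws of $\boldsymbol{a}^{({\rm SI})}$ and $\boldsymbol{a}^{({\rm D})}$, with joint density $\prod_m \pi^{-1}e^{-|a_m^{({\rm SI})}|^2}\pi^{-1}e^{-|a_m^{({\rm D})}|^2}$, converts this into the $2M$-fold surface integral stated in \eqref{Eq:CDF_fsapprox}. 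For the expectation I would use $\mathbb{E}\{\hat{R}\}=\int_0^\infty\big(1-F_{\hat{R}}(r)\big)\,dr$, valid because $\hat{R}\ge 0$, and interchange the outer $r$-integral with the $2M$-fold average to arrive at \eqref{Eq:Exp_fsapprox}.

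The main obstacle is the per-port ratio survival function: integrating a Marcum $Q_1$ against a noncentral chi-squared density does not collapse to an elementary closed form, which is precisely why the $Q_1$ and $I_0$ special functions persist in the final expressions and why the overall result must be left as a $2M$-fold integral over $\boldsymbol{a}^{({\rm SI})}$ and $\boldsymbol{a}^{({\rm D})}$ rather than in fully closed form. A secondary care point is justifying the interchange of the $r$-integral with the $2M$-fold average in the expectation and confirming integrability near $r=0$, where the product of conditional survival functions tends to one.
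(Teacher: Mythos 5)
Your proposal is correct and follows essentially the same route as the paper's Appendix B: conditioning on $(\hat{\boldsymbol{a}}^{({\rm D})},\hat{\boldsymbol{a}}^{({\rm SI})})$ to make each $\lvert \hat{g}_n^{({\rm Z})}\rvert^2$ a scaled noncentral chi-squared variable (Rice envelope), exploiting the conditional independence induced by the $b_n^{({\rm Z})}$ terms to write the CDF of the minimum as one minus a product of per-port survival functions built from $Q_1$ and $I_0$, and then de-conditioning over the $2M$ Gaussian coefficients, with the expectation obtained via $\mathbb{E}\{\hat{R}\}=\int_0^\infty (1-F_{\hat{R}}(r))\,dr$. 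The only cosmetic difference is the order of operations for the expectation (the paper computes the conditional expectation first and then averages over the coefficients, while you integrate the unconditional CDF over $r$), which is the same computation by Fubini.
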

\begin{figure*}[tb]
    \begin{equation}\label{Eq:CDF_fsapprox}
        \begin{aligned}
            F_{\hat{R}} = & 1- \frac{1}{\pi^{2M}} \iint_\mathbb{C} \cdots \iint_\mathbb{C}  \exp{\left\{ - \sum_{m=0}^{M-1} \left[ \lvert a_m^{({\rm D})}\rvert^2 + \lvert a_m^{({\rm SI})}\rvert ^2 \right]\right\}} \\ & \!\! \times \!\! \prod_{n=0}^{N-1} \frac{1}{2\beta_n^{({\rm D})}} \!\! \int_0^{+\infty} \!\!\!\! {Q_1 \!\! \left(\sqrt{\frac{\alpha_n^{({\rm SI})}}{\beta_n^{({\rm SI})}}},\sqrt{\frac{rz}{\beta_n^{({\rm SI})}}}\right)} \! \exp{\left(-\frac{\alpha_n^{({\rm D})}+z}{2\beta_n^{({\rm D})}}\right)I_0\left(\frac{\sqrt{\alpha_n^{({\rm D})}z}}{\beta_n^{({\rm D})}}\right)}dz da_0^{({\rm D})} da_0^{({\rm SI})}\cdots da_{M-1}^{({\rm D})} da_{M-1}^{({\rm SI})}
        \end{aligned}
    \end{equation}
    \hrulefill
    \begin{equation}\label{Eq:Exp_fsapprox}
        \begin{aligned}
            \mathbb{E} \{\hat{R}\} = & \frac{1}{\pi^{2M}} \int_0^{+\infty} \iint_\mathbb{C}  \cdots \iint_\mathbb{C}  \exp{\left\{ - \sum_{m=0}^{M-1} \left[ \lvert a_m^{({\rm D})}\rvert^2 + \lvert a_m^{({\rm SI})}\rvert ^2 \right]\right\}} \\ & \!\! \times \!\! \prod_{n=0}^{N-1} \!\! \frac{1}{2\beta_n^{({\rm D})}} \!\! \int_0^{+\infty} \!\!\!\! {Q_1 \!\! \left(\sqrt{\frac{\alpha_n^{({\rm SI})}}{\beta_n^{({\rm SI})}}},\sqrt{\frac{rz}{\beta_n^{({\rm SI})}}}\right)} \! \exp{\left(-\frac{\alpha_n^{({\rm D})}+z}{2\beta_n^{({\rm D})}}\right) \! I_0\left(\frac{\sqrt{\alpha_n^{({\rm D})}z}}{\beta_n^{({\rm D})}}\right)}dz da_0^{({\rm D})} da_0^{({\rm SI})}\!\! \cdots da_{M-1}^{({\rm D})} da_{M-1}^{({\rm SI})} dr
        \end{aligned}
    \end{equation}
    \hrulefill
\end{figure*}

\begin{proof}
See Appendix \ref{app:ProofTheo2}.
\end{proof}

Based on \eqref{Eq:aveP_fsapprox} and Theorem~\ref{th:approx}, we can obtain an approximation of the average RSI power $\overline{P}$ for the FAS-assisted IBFD system, and evaluate the effectiveness of the cancellation mechanism enabled by the FAS-assisted SIC approach.

\section{Simulation Results} \label{sec:Sim}
In this section, we present simulation results to evaluate the performance of FAS-assisted SIC in the IBFD system. For simplicity, we assume that the normalized signal power and normalized channel gains are set to $E_s = 1$ and $\sigma_g^2 = 1$. In addition, we have the normalized constant nonlinear SI power, i.e., $E_{\rm SI} = 1$. The power ratio is set to $\kappa_y = 30~{\rm dB}$. We consider a carrier frequency of $5$ GHz with a wavelength $\lambda = 6$ cm. It is therefore reasonable to consider the normalized FAS size within the interval $[0,5]$, indicating that the FAS size varies from $0~{\rm cm} \times 0~{\rm cm}$ to $30~{\rm cm} \times 30~{\rm cm}$.

Next, the results regarding the rate, the average RSI power, or the CDF can be obtained either through closed-form expressions derived in Section \ref{sec:AnalyticalResults}, or by Monte Carlo simulations from averaging over $10^6$ independent channel realizations.

\subsection{Rich Scattering Narrowband Channels}\label{subsec:SimRSChan}
Here we present the simulation results in rich scattering narrowband channels. We first present the capacity performance of FWS followed by an examination of the accuracy of the proposed bound and the approximation scheme for the RSI power performance. Certain results are obtained from closed-form expressions, including the lower bound derived in \eqref{Eq:lbaveP} and the approximation \eqref{Eq:aveP_fsapprox} of the average RSI power $\overline{P}$. The average capacity and empirical RSI power results on the precise channel model $g^{({\rm Z})}_n$ have been obtained through Monte Carlo simulations over $10^6$ independent channel realizations.

\begin{figure}
\includegraphics[width=\linewidth]{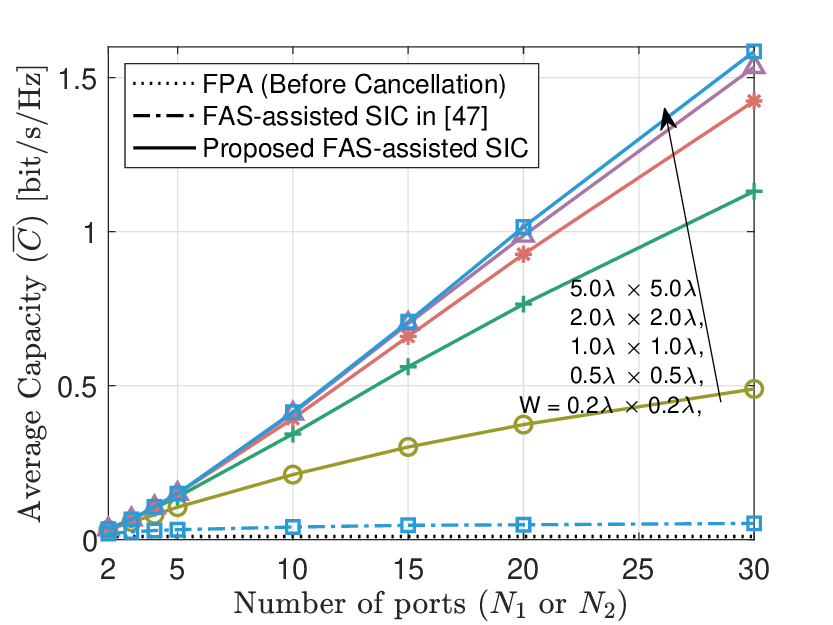}
\caption{Average capacity performance for different FAS configurations with various sizes and resolutions in rich scattering channels.}\label{Fig:RatevsNandW}
\end{figure}

\subsubsection{Capacity Performance}
Fig.~\ref{Fig:RatevsNandW} shows the average capacity of the desired FWS against different FAS configurations, $N$ and $W$. The capacity, computed using \eqref{Eq:achievableRate_Narr}, measures the rate at which the desired signal can be concurrently received during transmission. We compare our proposed FAS-assisted SIC scheme with both FPA-based system and the FAS-assisted method in \cite{sko2023full}. In the FPA system, the fixed antenna is unable to mitigate loopback SI, thereby serving as a benchmark of no SIC. For FAS-assisted method in \cite{sko2023full}, the results are presented with a large FAS plane ($W = 5\lambda \times 5\lambda$), depicted by the blue dash-dotted line with square markers.

The simulation results reveal that the average capacity of the FPA-based system is close to $0~{\rm bit/s/Hz}$, signifying that the desired FWS is entirely overwhelmed by the transmitted LBS before SIC. Though the capacity of the FAS-SIC scheme in \cite{sko2023full} exceeds that of FPA, it remains close to $0~{\rm bit/s/Hz}$. Even when $N = 30\times 30$, the forward signal capacity is approximately $0.05~{\rm bit/z/Hz}$ because \cite{sko2023full} selects the FAS port associated with the maximum forward channel $g^{({\rm D})}$, without considering the loopback SI channel $g^{({\rm SI})}$. Ignoring this SI channel can lead to a selected port with a relatively large SI channel gain, rendering it unsuitable for the IBFD system, particularly when the ratio $\kappa_y$ is large. Conversely, the proposed FAS-assisted SIC approach yields a satisfactory average capacity for the forward signal. Even with a minimal FAS size of $W = 0.2\lambda \times 0.2\lambda$, an average capacity of approximately $0.5~{\rm bit/s/Hz}$ for the forward signal can be realized with a sufficient number of FAS ports (e.g., $N = 30 \times 30$). 

With a fixed FAS size, the average capacity increases alongside the number of FAS ports, $N$, indicating that augmenting the number of ports within a constrained FAS size may contribute positively to enhancing the forward signal capacity in IBFD. Moreover, the results reveal that the capacity also increases with the FAS size, $W$, particularly when $N$ is large. Also, the capacity reaches a saturation point, beyond which the correlation among the FAS ports becomes negligible and the capacity is predominantly limited by the finite number of FAS ports. For instance, in Fig.~\ref{Fig:RatevsNandW}, the blue solid curve representing $W = 5\lambda \times 5\lambda$ exhibits marginal performance improvement compared to the purple curve with $W = 2\lambda\times 2\lambda$. This suggests that a typical size of FAS, feasible for a user terminal, could facilitate IBFD in rich scattering channels.

\begin{figure}
\centering
\includegraphics[width = \linewidth]{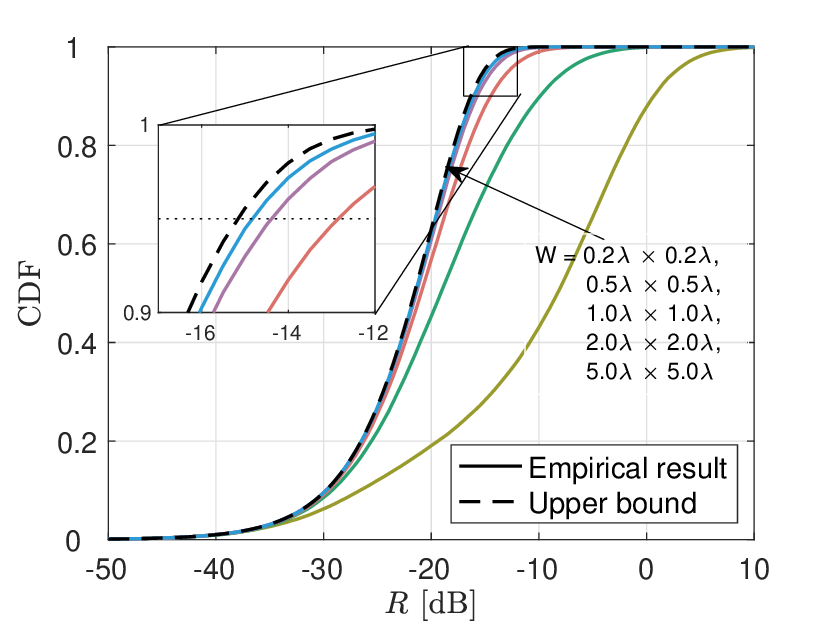}
\caption{The CDF of $R_{\min}$ with $N = 10 \times 10$ FAS ports.}\label{Fig:CDFofRmin}
\end{figure}

\subsubsection{RSI Power and its Lower Bound} 
In Fig.~\ref{Fig:CDFofRmin}, the CDF of the random variable $R$ along with its upper bound in \eqref{Eq:ubCDFRR} is provided. The results demonstrate that as the size of FAS, $W$, increases, the empirical curves converge towards the upper bound. This phenomenon occurs because when $N$ is fixed and $W$ is expanded, the distance between the adjacent ports increases, reducing the correlation among the ports.

The empirical results and the lower bound of RSI power are compared in Fig.~\ref{Fig:PRSIvsNandW}. The average RSI power, $\overline{P}$, is defined in \eqref{Eq:aveP_def}, with its lower bound calculated according to \eqref{Eq:lbaveP}. The results of the FPA-based system and the FAS-assisted SIC in \cite{sko2023full} are presented as benchmarks. Here, the results of the FPA-based system serves as the reference RSI power before cancellation. The disparity between this power and the results obtained from the FAS-assisted SIC can be interpreted as the cancellation capability.

As can be seen, the FAS-assisted SIC in \cite{sko2023full} with $N = 30 \times 30$ FAS ports can provide approximately $12$ to $20~{\rm dB}$ of SIC. Notably, its cancellation capability with a large size of $W = 5\lambda \times 5\lambda$ is lower than that of the proposed scheme with a much smaller size of $W = 0.5\lambda \times 0.5\lambda$. Comparing to the FPA-based scheme, the proposed FAS-assisted SIC demonstrates considerable interference cancellation. Even with a small size of $W = 0.2\lambda \times 0.2\lambda$, an approximately SIC of $15~{\rm dB}$ can be achieved. Moreover, the cancellation capability improves with the FAS size, $W$. The empirical results tend to approach the lower bound as the FAS size becomes large as well. 

For a fixed FAS size $W$, the RSI power exhibits a decreasing trend with an increasing number of ports until a saturation point is reached, beyond which additional ports yield no further enhancement in SIC. For example, when the physical size is established as $W = 5\lambda \times 5\lambda$, as represented by the blue solid curve, a SIC of $17~{\rm dB}$ is observed when $N = 2 \times 2$, increasing to approximately $40~{\rm dB}$ for $N = 30 \times 30$. The number of ports necessary to reach saturation becomes higher for larger sizes. The yellowish green solid curve indicates that the saturation is reached at $N^* = 3 \times 3$ for $W =0.2\lambda \times 0.2\lambda$, while it extends to approximately $N^* = 15\times 15$ for $W =0.5\lambda \times 0.5\lambda$ as shown by the green solid curve. 

With a fixed number of ports $N$, the RSI power decreases as the FAS size $W$ increases, particularly when the number of FAS ports $N$ is deemed sufficient. The SIC capability will also reach a saturation point, denoted as $W^*$, for a fixed $N$, with the average RSI power at this saturation point approximated by the lower bound. Moreover, this saturation point is relatively easy to attain with a size close to $W^* = 2\lambda \times 2 \lambda$. In essence, the RSI power for the FAS size ranging from $W = 2\lambda \times 2 \lambda$ to $5\lambda \times 5\lambda$ is comparable and closely approximates the lower bound, suggesting that a regular physical size of FAS is sufficient to provide substantial SIC in rich scattering channels. An understanding of this phenomenon can be gleaned from the lower bound in \eqref{Eq:lbaveP}, as the lower bound power is inversely related to $N$, thereby rendering the power in dBm inversely logarithmic related to $N$. Consequently, achieving a significant reduction in the lower bound of average RSI power necessitates a considerable increase in $N$.

\begin{figure}
\begin{center}
\includegraphics[width = \linewidth]{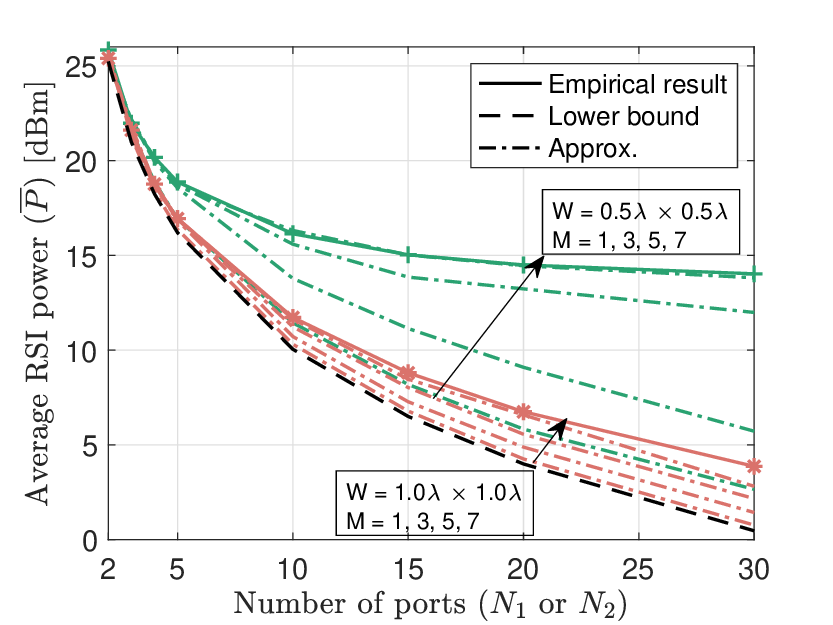}
\caption{Average RSI power and its approximation for different FAS configurations with various sizes and resolutions in rich scattering channels.}\label{Fig:FSApprox_PRSIvsN}
\end{center}
\end{figure}

\subsubsection{Approximated Results of RSI power}
In Fig.~\ref{Fig:FSApprox_PRSIvsN}, we study the performance of the approximation calculated as in \eqref{Eq:aveP_fsapprox}. Although the channel model in \eqref{Eq:chan_narr} has been significantly simplified through the introduction of $\hat{g}_n^{({\rm Z})}$ in \eqref{Eq:g_fsapprox}, the CDF and the expectation detailed in Theorem \ref{th:approx} remain challenging to compute due to the presence of $2M$-fold surface integrals. Therefore, the CDF and expectation in Theorem \ref{th:approx} are calculated based on numerical integral of $\hat{\boldsymbol{a}}^{({\rm D})}$ and $\hat{\boldsymbol{a}}^{({\rm SI})}$.

It is observed that in comparison to the empirical results, the approximated results serve as lower bounds that are tighter than the lower bound in \eqref{Eq:lbaveP}. As the value of $M$ increases, the bounds converge toward the empirical results. For FAS with $W = 0.5\lambda \times 0.5\lambda$, the approximation is deemed sufficiently accurate when $ M = 7$. However, a small gap persists between the approximation and the empirical results for the RSI power when $M=7$ for the case of $W = 1.0\lambda \times 1.0\lambda$ and $N = 30 \times 30$. In this case, a higher approximation level of $M$ is necessary to achieve the desired accuracy, since the expanding area of the FAS leads to a more complex rank correlation matrix $\boldsymbol{\Sigma}$, with additional non-negligible eigenvalues.

\begin{figure}[t]
\begin{center}
\includegraphics[width = \linewidth]{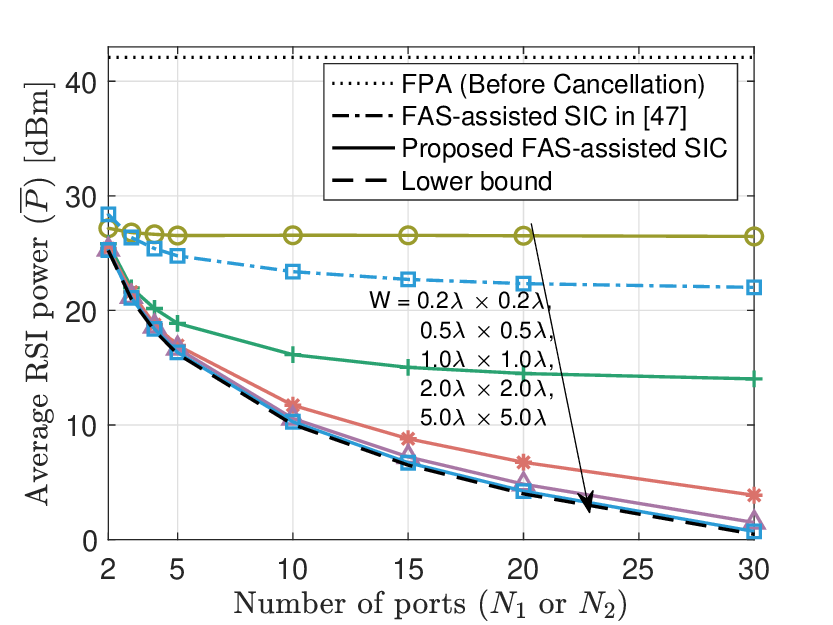}
\caption{Average RSI power for different FAS configurations with various sizes and resolutions in rich scattering channels.}\label{Fig:PRSIvsNandW}
\end{center}
\end{figure}

\subsection{Impact of Channel Estimation}
The aforementioned simulations are conducted under the assumption of perfect CSI. In this subsection, the impact of the channel estimation on the performance is evaluated. A linear minimum-mean-squared-error (LMMSE)-based channel estimation process, as described in \cite{sko2023full}, is utilized. Specifically, $L_e$ pilot-training symbols per block are employed to estimate the forward channels $\boldsymbol{g}^{({\rm D})}$, and additional $L_e$ pilot-training symbols are used to estimate the loopback channels $\boldsymbol{g}^{({\rm SI})}$. Note that these $L_e$ training symbols for loopback channel estimation introduce additional overhead compared to the approach in~\cite{sko2023full}. Considering the training overhead and taking this into account, the capacity in \eqref{Eq:achievableRate_Narr} is recalculated as
\begin{equation}
C_{\rm CE} = \left(1-\frac{2L_e}{L_c}\right)\log_2\left(1+\frac{E_s}{\kappa_yE_{\rm SI}R}\right),
\end{equation}
where $L_c$ is the length of channel coherence block. Consistent with \cite{sko2023full}, $L_c = 5\times 10^6$ is adopted for the simulations.

\begin{figure}
\includegraphics[width = \linewidth]{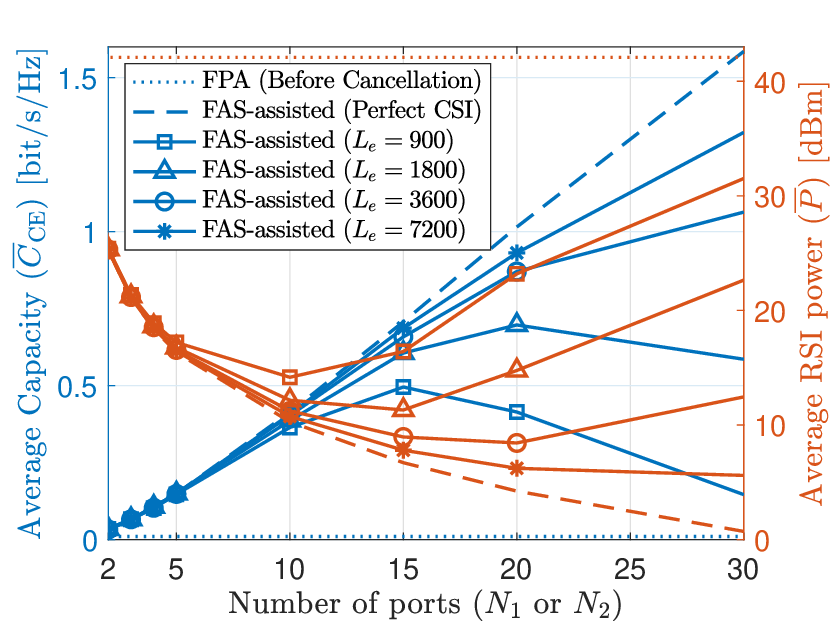}
\caption{Average RSI power and average rate for different FAS configurations with various resolutions for different length of channel estimation symbols, $L_e$, assuming the FAS size of $W = 5\lambda \times 5\lambda$.}\label{Fig:RatevsN_CE}
\end{figure}

Fig.~\ref{Fig:RatevsN_CE} illustrates the results of average RSI power and rate performance for varying the numbers of pilot-training symbols $L_e = \{900, 1800, 3600, 7200\}$. When utilizing a limited number of pilot-training symbols, such as $L_e = \{900, 1800\}$, the addition of extra ports results in a clear reduction in FAS-assisted SIC performance. This occurs because the allocation of pilot-training symbols for channel estimation of each port diminishes, thereby impairing the quality of channel estimation and jeopardizing the SIC performance. Conversely, by increasing the number of pilot-training symbols, for instance to $L_e = \{3600, 7200\}$, an adequate number of symbols can be dedicated to channel estimation of all the FAS ports. As a result, the forward rate experiences an increase, while the RSI power decreases, approaching to the ideal scenario with perfect CSI. This observation indicates that with a sufficient number of training symbols, an enhanced receive diversity gain can be achieved through an increased number of FAS ports, thereby improving the SIC performance. Nevertheless, in the scenarios with a limited quantity of pilot-training symbols, further increasing the number of FAS ports may result in performance degradation in a practical setting.

\begin{figure}
\begin{center}
\includegraphics[width = \linewidth]{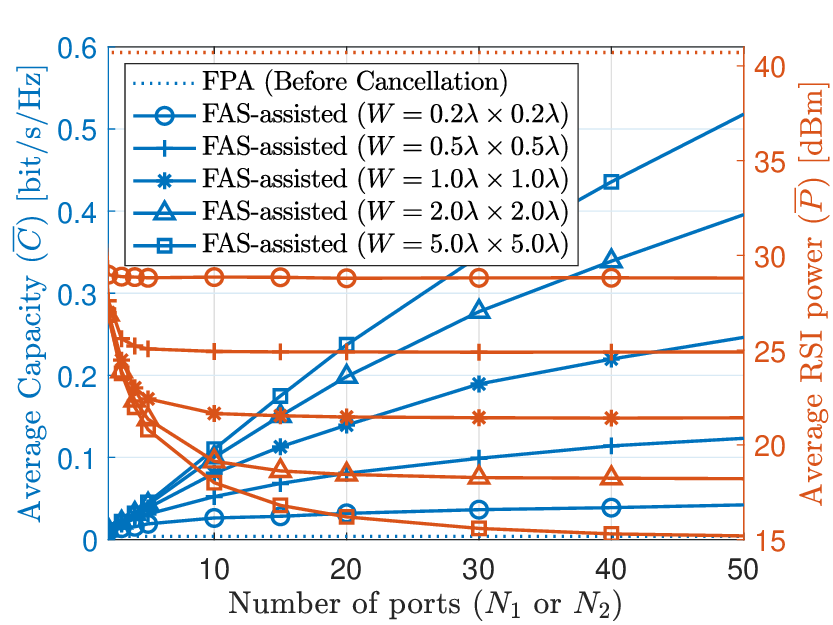}
\caption{Average RSI power and average capacity comparison for different FAS configurations with various sizes and resolutions in finite-scattering channel with $(K,N_p) = (3,2)$.} \label{Fig:RandPRSIvsNnW_FSChan} 
\end{center}
\end{figure}

\subsection{Finite-Scattering Narrowband Channels}
In many practical scenarios, a LOS exists between the co-located Tx and Rx at the BS. For this reason, here we provide Monte-Carlo simulations utilizing the finit-scatterer channel model for the loopback channel, as defined in \eqref{Eq:FSChann}. The results are presented in Fig.~\ref{Fig:RandPRSIvsNnW_FSChan}, where the Rice factor and the number of scattered components are set as $K = 3$ and $N_p = 2$.

A comparison of these results with those in rich scattering channels, as presented in Figs.~\ref{Fig:RatevsNandW} and \ref{Fig:PRSIvsNandW}, reveals a consistent trend that the average rate exhibits an increase according to the number of FAS ports, $N$, and the normalized FAS size, $W$. Additionally, the RSI power tends to decrease as $N$ or $W$ increases. This implies that the FAS-assisted SIC approach is again capable of providing interference cancellation in finite-scattering channels, particularly with larger FAS configurations. Nevertheless, the performance is notably diminished in the finite-scattering channel case. In particular, the FAS-assisted SIC achieves approximately $12~{\rm dB}$ of SIC, while the forward rate falls below $0.05~{\rm bit/s/Hz}$ with $W = 0.2\lambda \times 0.2\lambda$. Upon increasing the FAS configuration (either $N$ or $W$), the cancellation achieved by the FAS-assisted approach with a size of $W = 2\lambda \times 2\lambda$ reaches approximately $22.5~{\rm dB}$, even with a large number of ports ($N = 50\times 50$), and reaches approximately $25.5~{\rm dB}$ with $W = 5\lambda \times 5\lambda$. The forward rate remains relatively constrained in the finite-scatterer channel, measuring approximately $0.35~{\rm bit/s/Hz}$ with a FAS configuration of $N = 30\times 30$ and $W = 5\lambda \times 5\lambda$, and $0.52~{\rm bit/s/Hz}$ as the number of FAS ports increases to $N = 50\times 50$.

\begin{figure}
\begin{center}
\includegraphics[width = \linewidth]{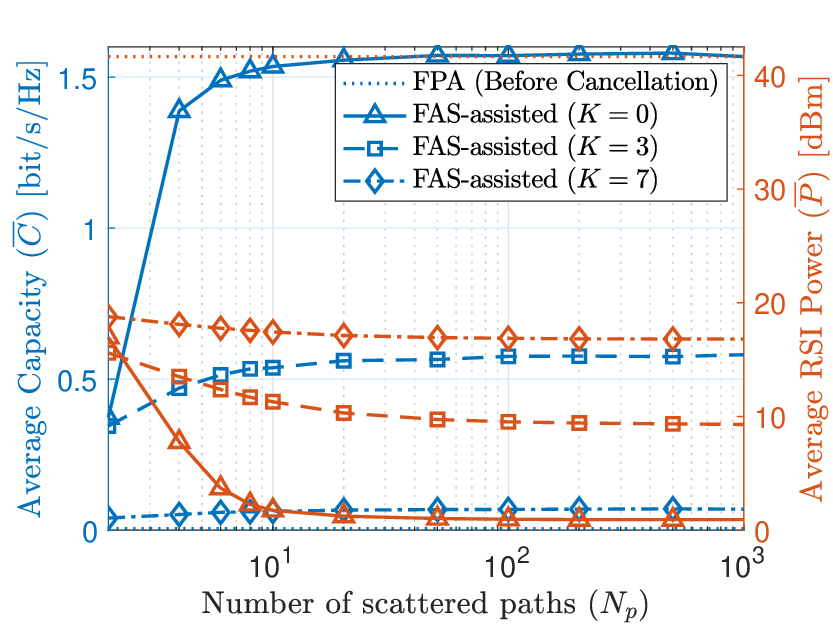}
\caption{Average RSI power and average capacity comparison for different channel condition $(K,N_p)$, in finite-scattering channels, with FAS configuration of $N = 30 \times 30$ and $W = 5\lambda \times 5\lambda$.}\label{Fig:RandPRSIvsNp_FSChan}
\end{center}
\end{figure}

Now we access the impact of the Rice factor, $K$, and the number of scattered components, $N_p$, on the SIC performance. Fig.~\ref{Fig:RandPRSIvsNp_FSChan} presents the results for capacity and RSI power in the finite-scatterer channels under various channel conditions, with a FAS configuration of $N = 30 \times 30$ and $W = 5\lambda \times 5\lambda$. The results align closely with those in Figs.~\ref{Fig:RatevsNandW} and \ref{Fig:PRSIvsNandW} when LOS is absent and multipath is rich, i.e., $K=0$ and a large $N_p$, indicating that the SIC capability in rich scattering channels may serve as the upper bound for the FAS-assisted SIC in the IBFD system. Nevertheless, the RSI power increases with a decrease in $N_p$ and an increase in $K$, while the forward rate diminishes under the same conditions. This illustrates that the SIC capability of the FAS-assisted approach deteriorates in strong LOS environments. Nonetheless, FAS-assisted SIC can still achieve approximately $20~{\rm dB}$ of cancellation, even when $K=7$. Also, we observe that the performance is more significantly influenced by $N_p$ when the Rice factor $K$ is small, as the LOS component dominates when $K$ is large.

\subsection{Wideband Channels}
In this subsection, we attempt to investigate the performance of FAS-assisted SIC in wideband channels. An orthogonal frequency division multiplexing (OFDM) system which operates with a bandwidth of $5~{\rm MHz}$ coupled with a $512$-point FFT, i.e., $F = 512$, is considered. The sampling frequency is calculated as $512\times 15\times 10^3 = 7.68 \times 10^6$ samples/sec, with a sampling duration of approximately $0.13~{\rm \mu s}$. Considering a LOS environment between the co-located Tx and Rx, the loopback channel is modeled as Rician fading, which is applicable in wireless backhaul scenarios. Specifically, we utilize a 3-tap model, with parameters derived from channel calculations by ray-tracing for a specific location pertinent to IAB \cite{suk2022full,Hong2023frequency} and represent the correlation among the FAS ports using the received correlation matrix $\boldsymbol{\Sigma}$ with $N\times N$ elements. The power and delay profile of this model are characterized as 
$$\text{[path power (dB)/path delay (\(\rm \mu s\))]:}~[0/0,-25/2,-30/7].$$
This channel model features a maximum delay spread of $7~{\rm \mu s}$, equating to $54$ OFDM samples in our simulation parameters.

\begin{figure}
\begin{center}
\includegraphics[width = \linewidth]{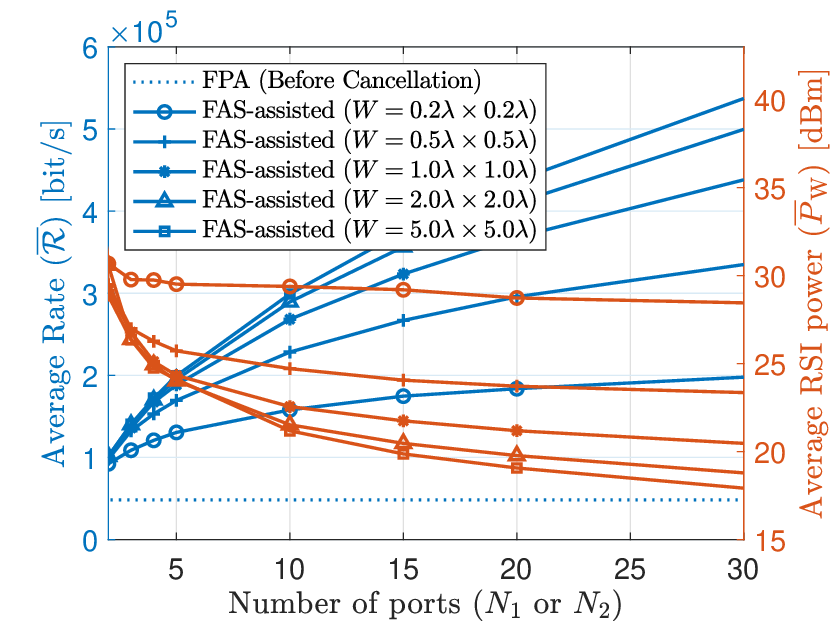}
\caption{Average RSI power comparison for different FAS configurations with various sizes and resolutions in the wideband IAB channel with $K=3$.}\label{Fig:RnPRSIvsNnW_Wide}
\end{center}
\end{figure}

Fig.~\ref{Fig:RnPRSIvsNnW_Wide} presents the simulation results for the wideband IAB channel. In these scenarios, the FAS-assisted SIC selects the FAS port using \eqref{Eq:PortSelection_wide}. Consequently, the average RSI power and the FWS rate are calculated as \eqref{eq:aveP_wide} and \eqref{eq:rate_wide}, respectively. 

The results indicate similarities to those observed in finite-scattering channels, as presented in Fig.~\ref{Fig:RandPRSIvsNnW_FSChan}. The FAS-assisted SIC demonstrates an interference cancellation range from $14$ to $25~{\rm dB}$ within the wideband channel, with configurations of $N = 30 \times 30$ FAS ports distributed over $W$ ranging from $0.2\lambda \times 0.2\lambda$ to $5\lambda \times 5\lambda$. The RSI power exhibits a decline with the increase in FAS configurations, $N$ or $W$, suggesting that greater configurations enhance cancellation capabilities of FAS-assisted SIC in IBFD. Notably, the transmission rate of the forward signal significantly benefits from the FAS-assisted SIC compared to the FPA-based system, with an improved rate. However, performance gains associated with increasing FAS size become marginal when transitioning from $W = 2\lambda \times 2 \lambda$ to $5\lambda \times 5\lambda$. As discussed in Fig. \ref{Fig:PRSIvsNandW}, this is because the cancellation level is approaching its upper bound. 

\begin{figure}
\begin{center}
\subfigure[]{\includegraphics[width = \linewidth]{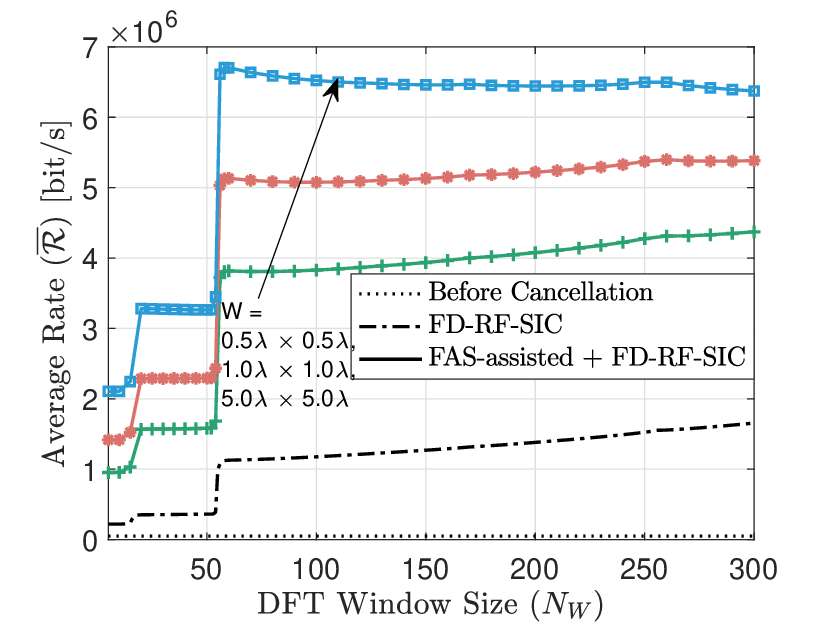}}
\subfigure[]{\includegraphics[width = \linewidth]{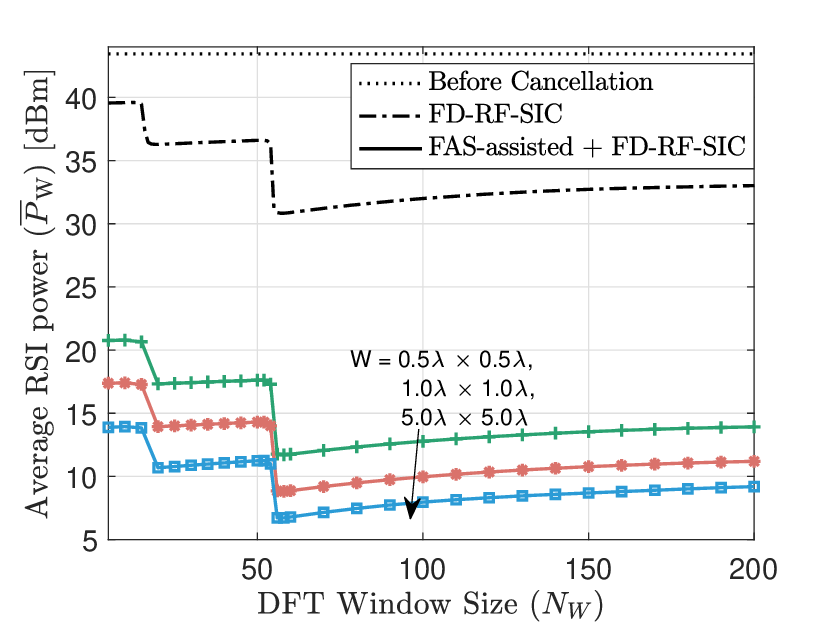}}
\caption{The performance of FAS-assisted SIC with $N = 30 \times 30$, combined with FD-RF-SIC in \cite{Hong2023frequency}. Results are presented for (a) average rate, and (b) average RSI power against the DFT window size, $N_W$.}\label{Fig:2SIC}
\end{center}
\end{figure}

\subsection{Combining with Frequency-Domain SIC}\label{subsec:combination}
As FAS-assisted SIC is implemented during FAS reception, it is typically compatible with other active SIC techniques. In this subsection, we explore the integration of the proposed FAS-assisted SIC with the recent proposed FD-RF-SIC in \cite{Hong2023frequency}. The FD-RF-SIC employs an RF reference signal (RERS) to calculate the SIC filter weights, utilizing DFT-windowing to minimize the total mean square error (MSE). The RFRS can be either an over-the-air RFRS (OTA-RFRS) or a direct RFRS (D-RFRS). The results obtained from combining FAS-assisted SIC and FD-RF-SIC are illustrated in Fig.~\ref{Fig:2SIC}, where the simulations utilize the D-RFRS for FD-RF-SIC. The FAS configuration is established with $N = 30 \times 30$ ports among various size of $W = 0.5\lambda \times 0.5\lambda$, $1\lambda \times 1\lambda$, or $5\lambda \times 5 \lambda$. 

As seen, the performance of FD-RF-SIC in \cite{Hong2023frequency}  improves with the increase of DFT size, $N_W$. Optimal performance is achieved when the window size exceeds the maximum delay spread. Additionally, the performance of FD-RF-SIC experiences two notable enhancements, which correspond to the delay profile of the IAB channels. With a sufficient size of DFT window, the FD-RF-SIC achieves approximately $12.5~{\rm dB}$ of cancellation, with a forward signal rate of approximately $1~{\rm Mbps}$. The integration of both SIC schemes can further diminish the RSI power, yielding an additional cancellation level. The combination of the two schemes produces an approximately $36.7~{\rm dB}$ of cancellation, with a forward signal rate of $7~{\rm Mbps}$, achieved under the FAS configuration of $N = 30\times 30$ and $W = 5\lambda \times 5\lambda$. This significant improvement underscores the efficacy of SIC capability and the compatibility of the proposed FAS-assisted SIC scheme. 

\section{Conclusion}\label{sec:conclusion}
In this paper, we introduced a FAS-assisted SIC framework for IBFD systems. This framework utilizes the FAS technology at the receiver to naturally identify the interference-free port. We derived analytical expressions for the lower bound and an approximation of the RSI power after FAS-assisted SIC under rich-scattering channels. Simulation results validated the approximation. We also considered more realistic conditions, accounting for channel estimation and situations under practical finite-scattering channels and IAB channels, as well as considering the integration with other RF-SIC schemes. The results demonstrated remarkable cancellation capabilities of the proposed FAS-assisted SIC scheme, achieving approximately $40~{\rm dB}$ of SIC and a forward signal rate surpassing $1.5~{\rm bit/s/Hz}$ in rich-scattering channels. In a more realistic IAB channel model, approximately $25~{\rm dB}$ of SIC along with a forward rate exceeding $5\times 10^5 ~{\rm bit/s}$ can be achieved with a $5~{\rm MHz}$ bandwidth. Additionally, the proposed FAS-assisted SIC scheme demonstrated excellent compatibility with other RF-SIC approaches, and a combined methodology of FAS-assisted SIC and FD-RF-SIC yielded approximately $37~{\rm dB}$ of SIC and a forward rate of $6.8\times 10^6~{\rm bit/s}$. 

\appendices
\section{Proof of Theorem \ref{th:cdf_R}}\label{app:ProofTheo1}
From \eqref{Eq:hh}, we can derive that
\begin{equation}
    \frac{|h_n^{({\rm SI})}|^2}{|h_n^{({\rm D})}|^2} = \frac{\sigma_g \lambda_n|a_n^{({\rm SI})}|^2}{\sigma_g \lambda_n|a_n^{({\rm D})}|^2} = \frac{|a_n^{({\rm SI})}|^2}{|a_n^{({\rm D})}|^2}.
\end{equation}
For all $0\leq n <N$ and ${\rm Z} \in \{{\rm D},{\rm SI}\}$, it can be seen that $a_n^{({\rm Z})}$ are independent of each other, $|a_n^{({\rm Z})}| \sim {\rm Rayleigh}\left(1/\sqrt{2}\right)$, and $|a_n^{({\rm Z})}|^2 \sim {\rm Exponential}\left(1\right)$. The random variable in \eqref{Eq:RR_min} can be reformulated as 
\begin{equation}\label{Eq:RRR_min}
    R' = \min\left\{\frac{|a_0^{({\rm SI})}|^2}{|a_0^{({\rm D})}|^2},\cdots,\frac{|a_{N-1}^{({\rm SI})}|^2}{|a_{N-1}^{({\rm D})}|^2}\right\}. 
\end{equation}
Evidently, $\kappa_y E_{\rm SI} R'$, with $R'$ in \eqref{Eq:RRR_min}, can be interpreted as the average RSI power in an uncorrelated channel represented by
\begin{equation}\label{Eq:y_vector_uc}
    \tilde{\boldsymbol{y}} = \boldsymbol{a}^{({\rm D})}s_{\rm D} +\sqrt{\kappa_y} \boldsymbol{a}^{({\rm SI})}\Psi (s_{\rm SI}) + \tilde{\boldsymbol{\eta}}.
\end{equation}
This uncorrelated channel can additionally be regarded as a special case of that in \eqref{Eq:y_vector}, applicable when ports are sufficiently distanced such that the channel gains across different ports are uncorrelated. Therefore, the CDF of $R'$ serves as an upper bound for that of $R$, such that $\overline{P}_{lb} = \kappa_y E_{\rm SI} \mathbb{E}\{R'\}$ is the lower bound for $\overline{P} = \kappa_y E_{\rm SI} \mathbb{E}\{R\}$.

The CDF of $R'$ can be calculated as
\begin{align}
F_{R'}(r) & = {\rm Pr}\left\{\min\left\{\frac{|a_0^{({\rm SI})}|^2}{|a_0^{({\rm D})}|^2},\cdots,\frac{|a_{N-1}^{({\rm SI})}|^2}{|a_{N-1}^{({\rm D})}|^2}\right\} \leq r\right\}\notag\\ 
& = 1- {\rm Pr}\left\{\frac{|a_0^{({\rm SI})}|^2}{|a_0^{({\rm D})}|^2} > r, \dots, \frac{|a_{N-1}^{({\rm SI})}|^2}{|a_{N-1}^{({\rm D})}|^2} > r \right\}\notag\\ 
& = 1 - \left(1-{\rm Pr}\left\{\frac{\hat{X}}{X} \leq r \right\} \right)^N,\label{Eq:CDFRR_min_proof}
\end{align}
where $\hat{X}$, $X\sim {\rm Exponential}(1)$ and they are independent. Hence, the CDF of $\hat{X}/X$ can be derived as 
\begin{align}
{\rm Pr}\left\{\frac{\hat{X}}{X} \leq r \right\} & \hspace{.5mm}= {\rm Pr} \left\{\hat{X}<rX \right\}\notag\\ 
& \hspace{.5mm}= \int_0^{+\infty} {F_{\hat{X}}\left(rx\right) f_X\left(x\right)}dx \notag\\ 
& \overset{(a)}{=} \int_0^{+\infty} (1-\exp(-rx))\exp(-x)dx\notag\\ 
& \hspace{.5mm}= 1-\frac{1}{r+1},\label{Eq:ExpDivExp}
\end{align}
where $(a)$ uses the CDF and PDF of exponential distribution for $\hat{X}$ and $X$. By substituting \eqref{Eq:ExpDivExp} into \eqref{Eq:CDFRR_min_proof}, we have \eqref{Eq:ubCDFRR}.

The expectation of $R'$ can be calculated as
\begin{align}
\mathbb{E}\left\{ R' \right\} & = \int_0^{+\infty} {\left[1-F_{R'}(r)\right]} dr\notag\\ 
& = \int_0^{+\infty} {\left(\frac{1}{r+1}\right)^N} dr\notag\\ 
& = \frac{1}{N-1}.\label{Eq:expRR_min}
\end{align}
By substituting \eqref{Eq:expRR_min} into $\overline{P}_{lb} = \kappa_y E_{\rm SI} \mathbb{E}\{R'\}$, we obtain \eqref{Eq:lbaveP}.

\section{Proof of Theorem \ref{th:approx}}\label{app:ProofTheo2}
It is established from \eqref{Eq:g_fsapprox} that the approximated channel $\hat{g}^{({\rm Z})}_n$ adheres to a Gaussian distribution. Given a specific $\hat{\boldsymbol{a}}^{({\rm Z})} = [a_0^{({\rm Z})},\dots,a_{M-1}^{({\rm Z})}]^T$, we have
\begin{equation}
\hat{g}^{({\rm Z})}_n \sim \mathcal{CN} \left(\sigma_g \sum_{m=0}^{M-1} \sqrt{\lambda_m} \mu_{n,m} a_m^{({\rm Z})}, 2\beta_n^{({\rm Z})} \right),
\end{equation}
where $\beta_n^{({\rm Z})}$ is given in \eqref{Eq:alphabeta}. Consequently, $\lvert \hat{g}^{({\rm Z})}_n \rvert$ follows Rice distribution as
\begin{equation}
\lvert \hat{g}^{({\rm Z})}_n\rvert \sim {\rm Rice} \left( \sqrt{\alpha_n^{({\rm Z})}}, \sqrt{\beta_n^{({\rm Z})}} \right),
\end{equation}
where $\alpha_n^{({\rm Z})}$ and $\beta_n^{({\rm Z})}$ are given in \eqref{Eq:alphabeta}. We now consider the random variable of $\lvert \hat{g}^{({\rm Z})}_n \rvert ^2$ conditioned on $\hat{\boldsymbol{a}}^{({\rm Z})}$. This variable follows a non-central Chi-square distribution, with CDF and PDF expressed, respectively, as 
\begin{equation}
F_{\lvert \hat{g}^{({\rm Z})}_n\rvert^2 | \hat{\boldsymbol{a}}^{({\rm Z})}}(r) = 1-Q_1\left(\sqrt{\frac{\alpha_n^{({\rm Z})}}{\beta_n^{({\rm Z})}}},\sqrt{\frac{r}{\beta_n^{({\rm Z})}}}\right),
\end{equation}
and
\begin{equation}
f_{\lvert \hat{g}^{({\rm Z})}_n\rvert^2 | \hat{\boldsymbol{a}}^{({\rm Z})}}\!(r) = \frac{1}{2\beta_n^{({\rm Z})}} \!\! \exp\!\!{\left(\!\!-\frac{\alpha_n^{({\rm Z})}+r}{2\beta_n^{({\rm Z})}}\!\right)\!\! I_0\!\!\left(\!\!\frac{\sqrt{\alpha_n^{({\rm Z})}r}}{\beta_n^{({\rm Z})}}\right)}.
\end{equation}
Note that $\lvert \hat{g}^{({\rm D})}_n\rvert^2$ and $\lvert \hat{g}^{({\rm SI})}_n\rvert^2$ are independent. Therefore, the random variable $\hat{R}_n = \lvert \hat{g}^{({\rm SI})}_n\rvert^2 / \lvert \hat{g}^{({\rm D})}_n\rvert^2$ conditioned on $(\hat{\boldsymbol{a}}^{({\rm D})},\hat{\boldsymbol{a}}^{({\rm SI})})$ represents the ratio of these two independent random variables. Its CDF can then be derived as
\begin{align}
F_{\hat{R}_n}&_{|(\hat{\boldsymbol{a}}^{({\rm D})},\hat{\boldsymbol{a}}^{({\rm SI})})}(r) \notag\\ 
= & \int_0^\infty F_{\lvert \hat{g}^{({\rm SI})}_n\rvert^2|\hat{\boldsymbol{a}}^{({\rm SI})}}(rz) f_{\lvert \hat{g}^{({\rm D})}_n\rvert^2|\hat{\boldsymbol{a}}^{({\rm D})}}(z) dz \notag\\ 
= & 1-\frac{1}{2\beta_n^{({\rm D})}} \int_0^{+\infty} {Q_1\left(\sqrt{\frac{\alpha_n^{({\rm SI})}}{\beta_n^{({\rm SI})}}},\sqrt{\frac{rz}{\beta_n^{({\rm SI})}}}\right)}\notag\\ 
& \quad\times \exp{\left(-\frac{\alpha_n^{({\rm D})}+z}{2\beta_n^{({\rm D})}}\right)I_0\left(\frac{\sqrt{\alpha_n^{({\rm D})}z}}{\beta_n^{({\rm D})}}\right)}dz.
\end{align}
For a specified pair $(\hat{\boldsymbol{a}}^{({\rm D})},\hat{\boldsymbol{a}}^{({\rm SI})})$, the independent variables $b_n^{({\rm Z})}, \forall n \in \{0,\dots, N-1\}$ in \eqref{Eq:g_fsapprox} indicate that $\hat{R}_n$ are independent of each other. Consequently, the CDF of $\hat{R} = \min\{\hat{R}_0,\dots, \hat{R}_{N-1}\}$ can be calculated as
\begin{align}
F_{\hat{R}}& _{|(\hat{\boldsymbol{a}}^{({\rm D})},\hat{\boldsymbol{a}}^{({\rm SI})})}(r) \notag\\ 
= & 1- \prod_{n=0}^{N-1}\left[1-F_{\hat{R}_n|(\hat{\boldsymbol{a}}^{({\rm D})},\hat{\boldsymbol{a}}^{({\rm SI})})}(r)\right] \notag\\ 
= & 1- \prod_{n=0}^{N-1} \frac{1}{2\beta_n^{({\rm D})}} \int_0^{+\infty} {Q_1\left(\sqrt{\frac{\alpha_n^{({\rm SI})}}{\beta_n^{({\rm SI})}}},\sqrt{\frac{rz}{\beta_n^{({\rm SI})}}}\right)} \notag\\ 
& \quad\times \exp{\left(-\frac{\alpha_n^{({\rm D})}+z}{2\beta_n^{({\rm D})}}\right)I_0\left(\frac{\sqrt{\alpha_n^{({\rm D})}z}}{\beta_n^{({\rm D})}}\right)}dz.\label{Eq:conPDF_hatR}
\end{align}
As a result, the CDF of $\hat{R}$ can be considered as the expectation of \eqref{Eq:conPDF_hatR} over $(\hat{\boldsymbol{a}}^{({\rm D})},\hat{\boldsymbol{a}}^{({\rm SI})})$, resulting in \eqref{Eq:CDF_fsapprox}.
Similarly, based on the CDF in \eqref{Eq:conPDF_hatR}, the expectation of $\hat{R}$ conditioned on $(\hat{\boldsymbol{a}}^{({\rm D})},\hat{\boldsymbol{a}}^{({\rm SI})})$ is expressed as
\begin{align}
\mathbb{E} \{& \hat{R}| (\hat{\boldsymbol{a}}^{({\rm D})},\hat{\boldsymbol{a}}^{({\rm SI})}) \} \notag\\ 
= & \int_{0}^{+\infty} 1- F_{\hat{R}| (\hat{\boldsymbol{a}}^{({\rm D})},\hat{\boldsymbol{a}}^{({\rm SI})})}(r) dr \notag\\ 
= & \int_{0}^{+\infty} \prod_{n=0}^{N-1} \frac{1}{2\beta_n^{({\rm D})}} \int_0^{+\infty} {Q_1\left(\sqrt{\frac{\alpha_n^{({\rm SI})}}{\beta_n^{({\rm SI})}}},\sqrt{\frac{rz}{\beta_n^{({\rm SI})}}}\right)} \notag\\ 
& \quad\times \exp{\left(-\frac{\alpha_n^{({\rm D})}+z}{2\beta_n^{({\rm D})}}\right)I_0\left(\frac{\sqrt{\alpha_n^{({\rm D})}z}}{\beta_n^{({\rm D})}}\right)}dzdr.\label{Eq:conExp_hatR}
\end{align}
Thus, the expectation of $\hat{R}$ can be obtained by the expectation of \eqref{Eq:conExp_hatR} over $(\hat{\boldsymbol{a}}^{({\rm D})},\hat{\boldsymbol{a}}^{({\rm SI})})$, leading to the result in \eqref{Eq:Exp_fsapprox}.

\bibliographystyle{IEEEtran}

\end{document}